\newcommand{\hr}{{\mathcal H}}
\newcommand{\cs}{{\mathcal S}}
\newcommand{\kr}{{\mathcal K}}
\newcommand{\cc}{{\mathbb C}}
\newcommand{\M}{{\mathcal M}}
\newcommand{\nn}{{\mathbb N}}
\newcommand{\eps}{{\varepsilon}}        
\newcommand{\cW}{\mathcal W}
\newcommand{\cX}{\mathcal X}
\newcommand{\cY}{\mathcal Y}
\newcommand{\bX}{\mathbf X}
\newcommand{\bY}{\mathbf Y}
\newcommand{\bZ}{\mathbf Z}
\newcommand{\bU}{\mathbf U}
\newcommand{\bM}{\mathbf M}
\newcommand{\bK}{\mathbf K}
\newcommand{\bL}{\mathbf L}
\newcommand{\bfn}{\mathbf n}
\newcommand{\bfm}{\mathbf m}
\newcommand{\bfx}{\mathbf x}
\newcommand{\eins}{{\mathbbm{1}}}
\newcommand{\BIGOP}[1]
{
\mathop{\mathchoice%
{\raise-0.22em\hbox{\Large $#1$}}%
{\raise-0.05em\hbox{\large $#1$}}{\hbox{\large $#1$}}{#1}}}
\newcommand{\BIGboxplus}{\mathop{\mathchoice%
{\raise-0.35em\hbox{\huge $\boxplus$}}%
{\raise-0.15em\hbox{\Large $\boxplus$}}{\hbox{\large $\boxplus$}}{\boxplus}}}
\newtheorem{theorem}{Theorem}
\newtheorem{definition}{Definition}
\newtheorem{lemma}{Lemma}
\newtheorem{remark}{Remark}
\newcommand{\tr}{\mathrm{tr}}
\begin{document}
\title{The Classical-Quantum Multiple Access Channel with Conferencing Encoders and with Common Messages}
\author{H. Boche, J. N\"otzel \\
\scriptsize{Electronic addresses: \{boche, janis.noetzel@\}tum.de}
\vspace{0.2cm}\\
{\footnotesize Lehrstuhl f\"ur Theoretische Informationstechnik, Technische Universit\"at M\"unchen,}\\
{\footnotesize 80290 M\"unchen, Germany}
}
\maketitle

\begin{abstract}
We prove coding theorems for two scenarios of cooperating encoders for the multiple access channel with two classical inputs and one quantum output. In the first scenario (ccq-MAC with common messages), the two senders each have their private messages, but would also like to transmit common messages. In the second scenario (ccq-MAC with conferencing encoders), each sender has its own set of messages, but they are allowed to use a limited amount of noiseless classical communication amongst each other prior to encoding their messages. This conferencing protocol may depend on each individual message they intend to send. The two scenarios are related to each other not only in spirit - the existence of near-optimal codes for the ccq-MAC with common messages is used for proving the existence of near-optimal codes for the ccq-MAC with conferencing encoders.
\end{abstract}
\begin{section}{Introduction}
The model of a (classical) multiple access channel has first been studied by Shannon in \cite{shannon-two-way-channels}, who formulated the model and started to analyze it. Later on Ahlswede \cite{ahlswede-multiway} and Liao \cite{liao} gave full solutions to the problem.\\
In 1983 Willems published the work \cite{willems}, introducing the model of a MAC with conferencing encoders. He gave a proof of the weak converse and a direct part that, together, built a complete coding theorem.\\
In this model, each of the encoders wants to transmit his own set of messages. But in contrast to the usual MAC model, they can both gain at least partial knowledge of the other sender's message through \emph{conferencing}: An iterative and noiseless exchange of messages under some given rate constraint. The main question then is, how the capacity region of the MAC with conferencing encoders depends on the allowed rates of the conference. Related to that, one may ask questions about the structure of an optimal conference - but at least in this model, it turns out that already a one-step conferencing protocol is enough to gain the full benefit from conferencing. Both results were obtained by Willems in \cite{willems}, who reduced the direct part to an application of the coding theorem for the MAC with a common messages that had already been solved by Slepian and Wolf in \cite{slepian-wolf-MAC}.\\
The model fits into a broader range of problems in which partial cooperation between different parties of some communication scenario is allowed. Although it seems only reasonable to assume these enhanced abilities facilitate the tasks at hand and enables a higher information throughput, it was proven recently that cooperation can also stabilize a communication system, leading to a discontinuous behaviour when switching from zero to nonzero cooperation \cite{boche-wiese}.\\
Cooperation in communication systems has generally received a lot of attention in recent years: Bross, Lapidoth, Wigger solved the case of a gaussian MAC with conferencing encoders in \cite{bross-lapidoth-wigger}, and the work \cite{wigger} by Wigger contains a broad investigation of the topic, including scenarios with feedback. The impact of conferencing has also been studied by Do, Oechtering and Skoglund for relay channels in \cite{do-oechtering-skoglund} and for the intereference channel by Marik, Yates and Kramer \cite{maric-yates-kramer}. For the case that the channel is not exactly known (compound MAC with conferencing encoders), a full coding theorem was obtained in \cite{wiese-boche-bjelakovic-jungnickel} the artbitrarily varying MAC with conferencing encoders was described in \cite{boche-wiese} who also provided a full coding theorem including the dichotomic behaviour of the deterministic capacity of that channel, and \cite{simeone-gunduz-poor-goldsmith-shamai} considered the case of an interference channel with conferencing at the decoder.
\\\\
In the present paper we extend the results of Willems to quantum channels. More precisely, we consider two senders, both of which are connected to the receiver by a ccq-MAC, a generalization of the classical setting in which the outputs of the channel are quantum states. Both senders transmit their classical messages to one receiver, who tries to decode them. A full solution of the coding problem for the ccq-MAC without conferencing has been achieved by Winter in \cite{winter}. He proved the diret part of the coding result by using timesharing, whereas the later work \cite{fawzi-hayden-savov-sen-wilde} by Fawzi, Hayden, Savov, Sen and Wilde provided, among results concerning the interference channel, a different proof of the direct part of the coding theorem for the ccq-MAC, enabling the receiver to decode both messages \emph{simultaneously}.\\
We shall use this later result together with a coding theorem for cq-channels that was developed by Winter in \cite{winter-diss} and has the property that at least partial knowledge about the codewords is given, although the codes whose existence are guaranteed by the theorem are randomly chosen. Together, these results enable us to prove the direct part of a coding theorem for the ccq-MAC with conferencing encoders. Like in the classical case, we allow the two senders to exchange messages amongst each other prior to encoding the messages that ought to be sent to the receiver. A very brief formulation of our main result then reads as follows:
\begin{center}\emph{Conferencing can enlarge the capacity region of a ccq-MAC.}\end{center}
In the classical setting, much more is known: Conferencing can for example stabilize the communication between two senders and one receiver when the channel they are transmitting over is not memoryless but arbitrarily varying. Such channel models capture for example the case where the communication line between some number of legal users is being actively manipulated by an evil party in order to prevent the communication. Good codes in such a setting are robust against a large class of clearly specified attacks, making them a good choice for applications in certain security applications. For the arbitrarily varying MAC, it turns out that already very small conferencing capacities can stabilize the whole system and boost its capacity from zero up to the maximally attainable value \cite{boche-wiese}.\\
The existence of a similar result for the quantum case seem to be a reasonable assumption, and the present paper is a first step into that direction. A second and challenging step here might be the development of coding results for the ccq compound MAC. In this model, both senders have to encode their classical messages into a set of quantum states that are being measured by the receiver - the additional assumption being, that the states are only known up to some precision, so the codes in that scenario have to work for every possibly allowed choice of the channel.\\
A standard approach from classical information theory that was developed by Ahlswede in \cite{ahlswede-elimination} is to use codes for compound channels (with one sender and one receiver, in the original setting) together with shared randomness between sender and receiver in order to obtain random codes that are robust even against arbitrarily varying noise or attacks. But this approach requires an \emph{exponential} decrease of the error for the respective compound channel model, whereas we even only have a \emph{polynomial} decrease of the error probability for the memoryless ccq-MAC. This rather slow speed of convergence comes from utilizing the coding result of \cite{fawzi-hayden-savov-sen-wilde}, so another step could be to use or develop other coding results that have the desired exponentially fast decrease of error.
\\\\
This paper is structured as follows. In Section \ref{sec:Notation}, we summarize the notation necessary in the remainder. The following Section \ref{sec:Definitions} contains the necessary definitions of codes, conferencing, achievable rates and rate regions. Section \ref{sec:Main Results} enlists our main results: A coding theorem for the ccq-MAC with conferencing encoders and another one for the ccq-MAC with a common message.\\
The rest of the paper, contained in Section \ref{sec:Proofs}, is devoted to the proofs of these results. First, in Subsection \ref{subsec:proof-of-converse-for-conferencing-MAC}, we prove the converse for the ccq-MAC with conferencing encoders.\\
Subsection \ref{subsec:direct-part-for-MAC-with-common-message} continues with the proof of the direct part for the ccq-MAC with a common message.\\
In the next subsection we apply the insights we gained in Subsection \ref{subsec:direct-part-for-MAC-with-common-message} in order to prove the direct part of the coding theorem for conferencing encoders. Since this part of our work is essentially identical to the corresponding one in \cite{willems}, we give only an outline of the basic idea in Subsection \ref{subsec:direct-part-for-MAC-with-conferencing-encoders}.\\
Finally, in Subsection \ref{subsec:converse-for-MAC-with-common-message}, we prove the converse for the ccq-MAC with a common message.
\end{section}
\begin{section}{\label{sec:Notation}Notation}
All Hilbert spaces are assumed to have finite dimension and are over the field $\cc$. The set of linear operators from $\hr$ to $\hr$ is denoted $\mathcal B(\hr)$. The adjoint of $b\in\mathcal B(\hr)$ is marked by a star and written $b^\ast$.\\
$\cs(\hr)$ is the set of states, i.e. positive semi-definite operators with trace (the trace function on $\mathbb B(\hr)$ is written $\tr$) $1$ acting on the Hilbert space $\hr$. Pure states are given by projections onto one-dimensional subspaces. A vector $x\in\hr$ of unit length spanning such a subspace will therefore be referred to as a state vector, the corresponding state will be written
$|x\rangle\langle x|$. For a finite set $\mathbf X$ the notation $\mathfrak{P}(\mathbf X)$ is reserved for the set of probability distributions on $\mathbf X$, and
$|\mathbf X|$ denotes its cardinality. For any $l\in\nn$, we define $\bX^l:=\{(x_1,\ldots,x_l):x_i\in\bX\ \forall i\in\{1,\ldots,l\}\}$, we also write $x^l$ for the elements of $\bX^l$. Associated to every such element is a function $N(\cdot|x^l):\bX\to\nn$ defined by $N(x|x^l):=|\{i:x_i=x\}|$.\\
The set of classical-quantum channels (abbreviated here using the term 'cq-channels') with finite input alphabet $\mathbf Z$ and output system $\kr$ is denoted $CQ(\mathbf Z,\kr)$.\\
For any natural number $N$, we define $[N]$ to be the shortcut for the set $\{1,...,N\}$.\\
Using the usual operator ordering symbols $\leq$ and $\geq$ on $\mathcal B(\hr)$, the set of measurements with $N\in\nn$ different outcomes is written
\begin{align}\M_N(\hr):=\{\mathbf D:\mathbf D=(D_1,\ldots,D_N)\ \wedge\ \sum_{i=1}^ND_i\leq\eins_\hr\ \wedge\ D_i\geq0\ \forall i\in[N]\}.\end{align}
To every $\mathbf D\in \M_N(\hr)$ there corresponds a unique operator defined by $D_0:=\eins_\hr-\sum_{i=1}^ND_i$. Throughout the paper, we will assume that $D_0=0$ holds. This is possible in our scenario, since adding the element $D_0$ to any of the other $D_1,\ldots,D_N$ does not decrease the performance of a given code.\\
The von Neumann entropy of a state $\rho\in\mathcal{S}(\hr)$ is given by
\begin{equation}S(\rho):=-\textrm{tr}(\rho \log\rho),\end{equation}
where $\log(\cdot)$ denotes the base two logarithm which is used throughout the paper.\\
The Holevo information is for a given channel $\mathcal W \in CQ(\mathbf{Z},\hr)$ and input probability distribution $p \in \mathfrak P(\mathbf{X})$ defined by
\begin{align}
 \chi(p, \mathcal W) := S(\overline{\mathcal W}) - \sum_{z \in \mathbf{Z}} p(z) S(\mathcal W(z)),
\end{align}
where $\overline{\mathcal W}$ is defined by $\overline{\mathcal W} := \sum_{z \in \mathbf{Z}} p(z) \mathcal W(z)$. We shall employ a slightly different notation that is closer to the one used in the classical scenario. To the distribution $p$ we can always associate a random variable $Z$ with values in $\bZ$ that is distributed according to $p$. If we label the physical system that is modelled on the Hilbert space $\kr$ by $Q$, we can define
\begin{align}
\chi(Z;Q):=\chi(p,\mathcal W).
\end{align}
If our channel has a bipartite input ($\bZ=\bX\times\bY$), and $(X,Y)$ is a random variable on $\bX\times\bY$ that is distributed according to $\mathbb P((X,Y)=(x,y))=p(y)q(x|y)$ it even makes sense to define the quantity
\begin{align}
\chi(X;Q|Y):=\sum_{y\in\bY}p(y)\chi(q(\cdot|y),\mathcal W(\cdot\times y)).
\end{align}
Whenever necessary, the elements $x$ of some finite set $\bX$ will be identified with a set $\{|\bfx\rangle\langle \bfx|\}_{x\in\bX}\subset \mathbb \mathcal B(C^{|\bX|})$ of matrix units that are pairwise orthogonal (with respect to the Hilbert Schmidt inner product).
\end{section}
\begin{section}{\label{sec:Definitions}Definitions}
In the remainder, $\mathcal W\in\mathcal C(\bX\times\bY,\kr)$ will denote a classical, classical - quantum multiple access channel (ccq-MAC). The quantum part of the system will also be referred to by the symbol $Q$ and, given a probability distribution on the input system of the channel, the corresponding random variable will be written $(X,Y)$. Further random variables may arise.
\begin{definition}[Codes for the ccq-MAC with conferencing encoders]\label{def:codes-MAC-conferencing}
For given $l\in\nn$, an $(M_l,N_l,C,D)$ code $\mathfrak C_l)$ for the ccq-MAC with encoders conferencing at rates $C\geq0$ and $D\geq0$ consists of:
\begin{enumerate}
\item Two natural numbers $M_l$ and $N_l$ that form the message sets $[M_l]$ and $[N_l]$.
\item Positive numbers $C,D$ that give upper bounds on the overall rate of a \emph{conference}. This conference consists of: a natural number $K\in\nn$, finite message sets $V_{l,1},\ldots,V_{l,K}$ and $W_{l,1},\ldots,W_{l,K}$ ($V_{l,0}=W_{l,0}=\emptyset$ in order to have more compact notation) and conferencing functions
    \begin{align}
    g_{l,i}:[M_l]\times (\times_{j=0}^{i-1}V_{l,j})\times(\times_{j=0}^{i-1}W_{l,j})\mapsto W_{l,i},\qquad i\in[K],\\
     f_{l,i}:[M_l]\times (\times_{j=0}^{i-1}W_{l,j})\times(\times_{j=0}^{i-1}V_{l,j})\mapsto V_{l,i},\qquad i\in[K]
     \end{align}
     such that $\sum_{k=1}^K\log|V_{l,k}|\leq C$ and $\sum_{k=1}^K\log|W_{l,k}|\leq D$.\\
     The outcomes of the conference are stored in the set $U_l:=\prod_{i=1}^KW_i\times\prod_{i=1}^KV_i$. If the codewords $(n,m)$ were sent, they are given by arrays that will be written
     \begin{align}
     \mathcal C_l(m,n)&=(m,g_1(n),g_2(n,f_1(m)),g_3(n,f_1(m),f_2(m,g_1(n))),\ldots)\\
     \mathcal D_l(m,n)&=(n,f_1(m),f_2(m,g_1(n)),f_3(m,g_1(n),g_2(n,f_1(m))),\ldots).
     \end{align}
\item Two functions $f_l$ and $g_l$ such that $f_l$ takes as inputs the outcomes $\mathcal C_l(m,n)$ and $g_l$ the outcomes $\mathcal D_l(m,n)$ of the conference and $f_l$ outputs a corresponding codeword in $\bX^l$, while $g_l$ gives one in $\bY^l$.
\item A POVM $\mathbf D^l=\{D^l_{mn}\}_{m,n=1}^{M_l,N_l}\in\mathcal M_{M_l\cdot N_l}$.
\item Denoting the code by the corresponding outcomes $\mathcal C_l,\mathcal D_l$ of the conference, we can write its average success probability as
\begin{align}
p_{\mathrm{s}}(\mathfrak C_l)=\frac{1}{M_l}\frac{1}{N_l}\sum_{m=1}^{M_l}\sum_{n=1}^{N_l}\tr\{D^l_{mn}\mathcal W^{\otimes l}(f_l(\mathcal C_l(m,n))\times g_l(\mathcal D_l(m,n)))\}.
\end{align}
\end{enumerate}
\end{definition}
\begin{definition}[Achievability for the ccq-MAC with conferencing encoders]
A pair $(R_M,R_N)$ of nonnegative real numbers is said to be achievable for the ccq-MAC with encoders conferencing at rates $C\geq0$ and $D\geq0$ if there is a sequence $(\mathfrak C_l)_{l\in\nn}$ of codes as in Definition \ref{def:codes-MAC-conferencing} with conferencing rates $C$ and $D$ such that
\begin{align}
\liminf_{l\to\infty}\frac{1}{l}\log M_l\geq R_M,\ \ \liminf_{l\to\infty}\frac{1}{l}\log N_l\geq R_N\ \ \mathrm{and}\ \ \liminf_{l\to\infty}p_{\mathrm{s}}(\mathfrak C_l)=1.
\end{align}
\end{definition}
\begin{definition}[Capacity region of the ccq-MAC with conferencing encoders]
The capacity region $C(\mathcal W,C,D)$ of the ccq-MAC with encoders conferencing at rates $C\geq0$ and $D\geq0$ is defined to be the closure of the set of all rates that are achievable (for the ccq-MAC, with conferencing at rates $C$ and $D$).
\end{definition}
\begin{definition}[Codes for the ccq-MAC with common messages]\label{def:codes-MAC-common}
For $l\in\nn$, a code $\mathfrak C_l$ for the ccq-MAC with common messages consists of a triple $(K_l,T_l,M_l)$ of natural numbers, two encoding functions $f_l:[K_l]\times[M_l]\to\bX^l$, $g_l:[T_l]\times[M_l]\to\bY^l$, and a POVM $(\Lambda_{k,t,m})_{k,l,m=1}^{K_l,T_l,M_l}$. The success probability of the code is given by
\begin{align}
p_{\mathrm{s}}(\mathfrak C_l):=\frac{1}{K_l\cdot T_l\cdot M_l}\sum_{k,t,l=1}^{K_l,T_l,M_l}\tr\{\Lambda_{k,t,l}\mathcal W^{\otimes l}(f_l(k,m)\times g_l(t,m))\}.
\end{align}
\end{definition}
\begin{definition}[Achievability for the ccq-MAC with common messages]
A triple $(S_X,S_Y,S_C)$ of nonnegative real numbers is said to be achievable for the ccq-MAC with common messages if there exists a sequence of $(\mathfrak C_l)_{l\in\nn}$ of codes as in Definition \ref{def:codes-MAC-common} such that
\begin{align}
\liminf_{l\to\infty}\frac{1}{l}\log K_l\geq S_X,\ \ &\liminf_{l\to\infty}\frac{1}{l}\log T_l\geq S_Y,\ \ \liminf_{l\to\infty}\frac{1}{l}\log M_l\geq S_C\\ &\mathrm{and}\ \ \liminf_{l\to\infty}p_{\mathrm{s}}(\mathfrak C_l)=1.
\end{align}
\end{definition}
\begin{definition}[Capacity region of the ccq-MAC with common messages]
The capacity region of the ccq-MAC $\mathcal W$ with common messages is given by the closure of the set of all rate triples that are achievable (for $\mathcal W$, with common message).
\end{definition}
\end{section}
\begin{section}{\label{sec:Main Results}Main Results}
Our main results are two complete coding theorems: One for the ccq-MAC with conferencing encoders, the other for the ccq-MAC with a common message. This joint presentation is not just by chance: The direct part of the coding theorem for the model with a joint message serves as a building block for the model with conferencing senders.\\
We now state our theorems, in the same order as their proofs are given later. The first one is an outer bound on the capacity region of a ccq-MAC with conferencing encoders:
\begin{theorem}[Converse of the coding theorem for ccq-MAC with conferencing encoders]\label{theorem:converse-for-conferencing-MAC}
For the ccq-MAC with conferencing encoders, a rate pair $(R_X,R_Y)$ is achievable only if it is contained in the set
\begin{align}\mathfrak R_{\mathrm{conf}}(\mathcal W,C,D):=\mathrm{cl}(\cup_p\mathfrak R_{p,\mathrm{conf}}(\mathcal W,C,D))\end{align} defined by the sets $\mathfrak R_{p,\mathrm{conf}}(\mathcal W,C,D)$ of all pairs of real nonnegative numbers $(R_N,R_M)$ satisfying
\begin{align}
R_M&\leq \chi(X;Q|Y,U)+C\\
R_N&\leq \chi(Y;Q|X,U)+D\\
R_M+R_N&\leq \chi(X,Y;Q|U)+C+D\\
R_M+R_N&\leq \chi(X,Y;Q)
\end{align}
where the states used to evaluate the entropic quantities on the right hand sides are defined by
\begin{align}
\sum_{u,x,y}p(u,x,y)|u\rangle\langle u|\otimes|x\rangle\langle x|\otimes|y\rangle\langle y|\otimes\mathcal W(x,y)
\end{align}
and the distribution $p\in\mathfrak P(\bU\times\bX\times\bY)$ can be decomposed such that $p(u,x,y)=q(u)r(x|u)s(y|u)$ for suitable distributions $q\in\mathfrak P(\bU)$, where $s(\cdot|u)\in\mathfrak P(\bX)$ and $r(\cdot|u)\in\mathfrak P(\bY)$ for every $u\in\bU$. Finally, the cardinality of the alphabet $\bU$ can be restricted by the cardinality bound $|\bU|\leq|\cX|\cdot|\cY|+3$.
\end{theorem}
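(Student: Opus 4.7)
The plan is to follow Willems' classical converse \cite{willems} in the cq-setting, replacing Shannon mutual informations involving the channel output by Holevo quantities via the Holevo bound, and otherwise keeping the purely classical steps of \cite{willems} intact. Fix a sequence $(\mathfrak C_l)_{l\in\nn}$ of $(M_l,N_l,C,D)$-codes with $p_{\mathrm s}(\mathfrak C_l)\to 1$ and rates converging to $R_M,R_N$. Write $M,N$ for the (uniform, independent) input messages, $\hat M,\hat N$ for the decoder's POVM outcomes, and $U_l:=(V_{l,1},W_{l,1},\ldots,V_{l,K},W_{l,K})$ for the full conference transcript, which is a deterministic function of $(M,N)$ such that $X^l$ is a function of $(M,U_l)$ and $Y^l$ is a function of $(N,U_l)$.

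Two purely classical structural facts of the conferencing protocol drive the converse. First, a bit-budget bound: conditional on $N$, each $W$-round of $U_l$ is a function of $N$ and the earlier $V$'s alone, so an induction over conference rounds yields $H(U_l\mid N)\leq\sum_k\log|V_{l,k}|\leq lC$ and symmetrically $H(U_l\mid M)\leq lD$. Second, Willems' conditional independence $I(M;N\mid U_l)=0$: for every realization $u$ of $U_l$, the preimage $\{(m,n):U_l(m,n)=u\}$ is a Cartesian product $A_u\times B_u$ (each conference round pins down either $m$ or $n$ but not their joint value), so the independent prior $p(m)p(n)$ carries through to the posterior.

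The four rate bounds drop out of a standard chain-rule calculation using Fano's inequality $H(M,N\mid\hat M,\hat N)\leq l\eps_l$ with $\eps_l\to 0$, together with the Holevo bound $I(M;\hat M,\hat N\mid\cdot)\leq\chi(M;Q^l\mid\cdot)$. For the bound on $R_M$,
\begin{align*}
\log M_l=H(M\mid N)&\leq I(M;\hat M,\hat N\mid N)+l\eps_l\leq\chi(M;Q^l\mid N)+l\eps_l\\
&=\chi(M;Q^l\mid N,U_l)+\chi(U_l;Q^l\mid N)+l\eps_l\\
&\leq\chi(X^l;Q^l\mid Y^l,U_l)+H(U_l\mid N)+l\eps_l\\
&\leq\chi(X^l;Q^l\mid Y^l,U_l)+lC+l\eps_l,
\end{align*}
where the equality in the third line is the Holevo chain rule with $\chi(U_l;Q^l\mid M,N)=0$ (since $U_l$ is a function of $(M,N)$), the inequality in the fourth uses data processing along $M\to X^l\to Q^l$ at fixed $(Y^l,U_l)$ (noting $Y^l$ is determined by $(N,U_l)$) combined with Willems' independence in order to drop the conditioning on $N$, and the last inequality combines the Holevo bound $\chi(U_l;Q^l\mid N)\leq H(U_l\mid N)$ with the bit-budget estimate. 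The bound on $R_N$ follows by symmetry, the joint bound $R_M+R_N\leq\chi(X,Y;Q\mid U)+C+D$ by applying the same recipe to the pair $(M,N)$, and $R_M+R_N\leq\chi(X,Y;Q)$ by omitting the $U_l$-insertion entirely.

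Single-letterization uses subadditivity of the conditional Holevo quantity on the product channel,
\begin{align*}
\chi(X^l;Q^l\mid Y^l,U_l)\leq\sum_{i=1}^l\chi(X_i;Q_i\mid Y_i,U_l),
\end{align*}
which reduces to subadditivity $S(Q^l\mid Y^l,U_l)\leq\sum_iS(Q_i\mid Y_i,U_l)$ of von Neumann entropy together with exact additivity $S(Q^l\mid X^l,Y^l,U_l)=\sum_iS(\mathcal W(X_i,Y_i))$ for the product cq-state. A time-sharing index $T\sim\mathrm{Unif}([l])$ is then absorbed by setting $\bU:=U_l\times[l]$, $X:=X_T$, $Y:=Y_T$, and the required factorization $p(u,x,y)=q(u)r(x\mid u)s(y\mid u)$ is inherited letterwise from the Willems independence lemma. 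The cardinality bound $|\bU|\leq|\cX|\cdot|\cY|+3$ is the standard Fenchel-Eggleston-Carath\'eodory argument preserving the joint input distribution $p(x,y)$ along with the three scalar Holevo quantities appearing on the right-hand sides. The step I expect to require the most care is the cq Holevo chain rule in the third line of the above display, which must be justified at the level of the joint cq-state carrying $M,N,U_l$ as classical labels rather than as a formal manipulation of Shannon entropies; once this is settled, the remainder is a faithful transcription of the classical Willems argument.
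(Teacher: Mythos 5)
Your proposal is correct and follows essentially the same route as the paper: Fano's inequality, insertion of the conference transcript $U^l$ together with Willems' bit-budget bounds and conditional-independence lemma, the Holevo bound, subadditivity of the conditional Holevo quantity for single-letterization, and a Carath\'eodory-type cardinality reduction via the Ahlswede--K\"orner support lemma. The only cosmetic differences are that the paper splits off the term $I(M^l;U^l|N^l)\leq lC$ at the purely classical level \emph{before} applying the Holevo bound (so the cq chain rule you flag as delicate is never needed), and it phrases the final time-sharing step as membership of the averaged point in a closed convex region $\mathfrak R^4$ rather than via an explicit time index absorbed into $\bU$.
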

Second, we prove the existence of codes that transmit common messages as well as individual messages of two senders over a ccq-MAC with asymptotically vanishing average error probability, at certain rates. This means that we can give an inner bound on the capacity region of that model. The result is used afterwards to obtain a direct coding theorem for the ccq-MAC with conferencing encoders as well.
\begin{theorem}[Direct part of coding theorem for the ccq-MAC with a common message]\label{theorem:direct-part-of-MAC-with-common-message}
Every rate triple $(R_C,R_X,R_Y)$ satisfying $(R_C,R_X,R_Y)\in\mathfrak R_{\mathrm{comm}}(\mathcal W)$ is achievable. The convex set $\mathfrak R_{\mathrm{comm}}(\mathcal W)$ is given by \begin{align}\mathfrak R_{\mathrm{comm}}(\mathcal W)=\mathrm{cl}(\cup_q\mathfrak R_{q,\mathrm{comm}}(\mathcal W)),\end{align} where the sets $\mathfrak R_{q,\mathrm{comm}}(\mathcal W)$ are given by all triples $(S_C,S_X,S_Y)$ satisfying below inequalities for a distribution $q\in\mathfrak\bU\times\bX\times\bY$ having the structure $q(x,y,u)=p(u)r(x|u)s(y|u)\ \forall (u,x,y)\in\bU\times\bX\times\bY$ and with the overall cq state being $\sum_{u,x,y}q(u,x,y)|u\rangle\langle u|\otimes|x\rangle\langle x|\otimes|y\rangle\langle y|\otimes\mathcal W(x,y)$.
\begin{align}
S_X&\leq \chi(X;Q|Y,U)\\
S_Y&\leq \chi(Y;Q|X,U)\\
S_X+S_Y&\leq \chi(X,Y;Q|U)\\
S_C+S_X+S_Y&\leq \chi(X,Y;Q)
\end{align}
\end{theorem}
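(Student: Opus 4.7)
The plan is a layered (superposition) coding scheme that combines the random cq-channel coding theorem of \cite{winter-diss} for the common-message layer with the simultaneous ccq-MAC decoding theorem of \cite{fawzi-hayden-savov-sen-wilde} for the private-message layer. Fix a decomposition $q(u,x,y)=p(u)r(x|u)s(y|u)$ and a triple $(S_C,S_X,S_Y)$ strictly inside $\mathfrak R_{q,\mathrm{comm}}(\mathcal W)$. The codebook has three layers: an outer cloud-center codebook $\{u^l(m)\}_{m=1}^{M_l}$ indexed by the common message, and, attached to each $u^l(m)$, inner satellite codebooks $\{x^l(k,m)\}_{k=1}^{K_l}$ and $\{y^l(t,m)\}_{t=1}^{T_l}$ that carry the two private messages.

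For the outer layer I would consider the averaged cq-channel $\overline{\mathcal W}\in CQ(\bU,\kr)$ defined by $\overline{\mathcal W}(u):=\sum_{x,y}r(x|u)s(y|u)\mathcal W(x,y)$. The chain-rule identity $\chi(U;Q_{\overline{\mathcal W}})=\chi(X,Y;Q)-\chi(X,Y;Q|U)$ shows that the residual rate $\chi(X,Y;Q)-(S_X+S_Y)>S_C$ is below the Holevo capacity of $\overline{\mathcal W}$ with input distribution $p$. Applying the random coding theorem from \cite{winter-diss} to $\overline{\mathcal W}$ then yields a sequence of $(M_l,l)$-codes with $\tfrac{1}{l}\log M_l\to S_C$ and vanishing average error, and — this is the extra feature invoked in the introduction — one obtains this with every codeword $u^l(m)$ being of a single, robustly $p$-typical empirical type. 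Fix a realization of the outer codebook with these properties and denote the associated decoding POVM by $\{\Gamma_m\}_{m=1}^{M_l}$.

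For the inner layer I would, conditional on each realization $u^l(m)$, draw $x^l(k,m)$ and $y^l(t,m)$ independently from $\prod_{i=1}^l r(\cdot|u_i(m))$ and $\prod_{i=1}^l s(\cdot|u_i(m))$, respectively, with $\tfrac{1}{l}\log K_l\to S_X$ and $\tfrac{1}{l}\log T_l\to S_Y$. Because $u^l(m)$ is typical, the $l$ channel uses group (up to a vanishing fraction of positions) into $|\bU|$ sub-blocks of length $\approx l p(u)$, one for each $u\in\bU$; on each sub-block the codewords are i.i.d.\ from $r(\cdot|u)\otimes s(\cdot|u)$, and the channel restricted to that sub-block is the standard memoryless ccq-MAC $\mathcal W$. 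Applying the simultaneous-decoding theorem of \cite{fawzi-hayden-savov-sen-wilde} on each sub-block and taking the product of the resulting POVMs on the disjoint tensor factors produces an inner POVM $\{\Lambda_{k,t|m}\}$ that decodes $(k,t)$ with error vanishing in $l$, precisely when
\begin{align*}
S_X<\chi(X;Q|Y,U),\quad S_Y<\chi(Y;Q|X,U),\quad S_X+S_Y<\chi(X,Y;Q|U),
\end{align*}
which are exactly the hypotheses on $(S_X,S_Y)$.

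The decoder for the full code is obtained by measuring the outer POVM first and the inner one conditional on the outcome; formally, one sets $\Lambda_{k,t,m}:=\Gamma_m^{1/2}\Lambda_{k,t|m}\Gamma_m^{1/2}$, which is a sub-POVM by $A^{1/2}BA^{1/2}\leq A$ for $0\leq B\leq\eins$. The gentle measurement lemma then shows that the overall error is bounded by the sum of the outer and inner errors, both of which vanish; averaging over the inner randomness and expurgating a fraction of outer indices gives a deterministic code of the required type. Time-sharing across distributions $q$ and taking the closure yields $\mathfrak R_{\mathrm{comm}}(\mathcal W)$. The main obstacle is the interface between the two layers: the inner FHSSW argument needs a genuinely i.i.d.\ (or at least constant-composition) conditional input distribution, and this is exactly what the type-controlled version of Winter's theorem supplies — without that ``partial knowledge'' of the codewords, one could not cleanly reduce the inner step to the memoryless ccq-MAC coding problem of \cite{fawzi-hayden-savov-sen-wilde}.
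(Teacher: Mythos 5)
Your proposal matches the paper's own proof in all essentials: the same averaged cq-channel for the common-message layer, the same use of Winter's type-controlled random coding theorem to force the cloud centers into a typical set, the same per-symbol sub-block reduction to the simultaneous-decoding result of Fawzi--Hayden--Savov--Sen--Wilde for the private layer, and the same sandwiched POVM $\sqrt{\Xi_m}\,\Lambda^{(m)}_{k,t}\sqrt{\Xi_m}$ combined with the gentle measurement lemma for ensembles. The argument is correct and no further comparison is needed.
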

As was the case in the classical paper \cite{willems} by Willems, the existence of a coding result for the ccq-MAC with private and common messages enables one to prove the direct part of the coding theorem for the ccq-MAC with conferencing encoders, leading to the following result:
\begin{theorem}[Direct part of coding theorem for ccq-MAC with conferencing encoders]\label{theorem:direct-part-for-MAC-with-conferencing-encoders}
Every rate pair $(R_X,R_Y)\in\mathfrak R_{\mathrm{conf}}(\mathcal W)$ is achievable, thus $C(\mathcal W,C,D)=\mathfrak R_{\mathrm{conf}}(\mathcal W,C,D)$.
\end{theorem}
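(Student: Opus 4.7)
The plan is to reduce Theorem \ref{theorem:direct-part-for-MAC-with-conferencing-encoders} to the direct coding theorem for the ccq-MAC with a common message (Theorem \ref{theorem:direct-part-of-MAC-with-common-message}), following exactly the reduction introduced by Willems in the classical setting \cite{willems}. The conceptual content is the familiar observation that a single round of conferencing already exhausts the benefit of any multi-round conferencing protocol, so the formally intricate iterative conference of Definition \ref{def:codes-MAC-conferencing} can be replaced by a trivial one-shot exchange without any loss at the level of achievable rates.

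Concretely, fix a rate pair $(R_M,R_N)\in \mathfrak R_{p,\mathrm{conf}}(\mathcal W,C,D)$ with realising distribution $p(u,x,y)=q(u)r(x|u)s(y|u)$. I would split each sender's message into a ``shareable'' and a ``private'' part, writing $M=(M'',M')$ with $M''\in[\lfloor 2^{lC}\rfloor]$ and $M'\in[\lceil 2^{l(R_M-C)}\rceil]$, and analogously $N=(N'',N')$ with $N''\in[\lfloor 2^{lD}\rfloor]$ and $N'\in[\lceil 2^{l(R_N-D)}\rceil]$. The one-step conferencing protocol consists of sender $1$ forwarding $M''$ to sender $2$ through $W_{l,1}$ (possible since $\log|W_{l,1}|\leq C$ is enough to carry $M''$) and sender $2$ forwarding $N''$ to sender $1$ through $V_{l,1}$. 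After the conference both senders share the pair $(M'',N'')$, which is then declared the common message (at rate $S_C:=C+D$) of a common-message code as provided by Theorem \ref{theorem:direct-part-of-MAC-with-common-message}, applied to the very same distribution $q=p$; the parts $M'$ and $N'$ play the role of the two private messages at rates $S_X:=R_M-C$ and $S_Y:=R_N-D$.

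With this identification the four defining inequalities of $\mathfrak R_{p,\mathrm{conf}}(\mathcal W,C,D)$ translate termwise into the four defining inequalities of $\mathfrak R_{q,\mathrm{comm}}(\mathcal W)$: the bound $R_M\leq \chi(X;Q|Y,U)+C$ becomes $S_X\leq \chi(X;Q|Y,U)$, the bound $R_M+R_N\leq \chi(X,Y;Q|U)+C+D$ becomes $S_X+S_Y\leq \chi(X,Y;Q|U)$, the bound $R_M+R_N\leq \chi(X,Y;Q)$ becomes $S_C+S_X+S_Y\leq \chi(X,Y;Q)$, and similarly for $R_N$. Hence Theorem \ref{theorem:direct-part-of-MAC-with-common-message} guarantees a sequence of codes whose decoder recovers $(M'',N'',M',N')=(M,N)$ with vanishing average error, and this sequence is, by construction, an admissible code for the ccq-MAC with conferencing encoders.

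The main obstacle is essentially bookkeeping rather than analysis. One has to treat the boundary cases $R_M<C$ or $R_N<D$ by using only part of the conferencing budget (that is, by taking $\min\{R_M,C\}$ and $\min\{R_N,D\}$ as the conference rates), verify that block-length rounding $lC\mapsto\lfloor lC\rfloor$ only costs vanishing rate, and note that because $q=p$ is applied to both regions the auxiliary alphabet cardinality bound $|\bU|\leq |\bX|\cdot|\bY|+3$ from Theorem \ref{theorem:converse-for-conferencing-MAC} carries over, so the reduction is lossless. Combined with Theorem \ref{theorem:converse-for-conferencing-MAC} one then obtains $C(\mathcal W,C,D)=\mathfrak R_{\mathrm{conf}}(\mathcal W,C,D)$ as claimed.
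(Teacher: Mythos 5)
Your proposal is correct and follows essentially the same route as the paper: a one-shot conference in which each sender forwards a $\min\{R_M,C\}$- (resp.\ $\min\{R_N,D\}$-) rate piece of its message, the shared pair is treated as the common message, and Theorem \ref{theorem:direct-part-of-MAC-with-common-message} is invoked, with the four rate inequalities translating termwise exactly as you state. The paper itself only sketches this reduction (deferring details to Willems), so your write-up is, if anything, slightly more explicit about the message-splitting and rounding bookkeeping.
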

At last, and in order to have a coherent and self-contained presentation, we also prove the converse theorem for the ccq-MAC with common and private messages. This part, as well as the direct part for the ccq-MAC with conferencing encoders, shows that the two models are in fact closely related from an information theoretic point of view.
\begin{theorem}[Converse for the ccq-MAC with common message]\label{theorem:converse-for-MAC-with-common-message}
For the ccq-MAC with common message, no rate triple $(R_C,R_X,R_Y)$ outside of $\mathfrak R_{\mathrm{comm}}(\mathcal W)$ is achievable.
\end{theorem}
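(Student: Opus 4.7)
The plan is to adapt the standard classical converse for the MAC with common message (as in Slepian--Wolf~\cite{slepian-wolf-MAC} and Willems~\cite{willems}) to the quantum setting by replacing Shannon mutual information with the Holevo quantity $\chi$ and invoking, respectively, the data-processing inequality for $\chi$, subadditivity of the von Neumann entropy, and its concavity.

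Fix an achievable triple $(S_C,S_X,S_Y)$ and a sequence $(\mathfrak C_l)_{l\in\nn}$ of codes realizing it with vanishing average error. I draw the three messages $(M,K,T)$ uniformly on $[M_l]\times[K_l]\times[T_l]$, set $X^l:=f_l(K,M)$, $Y^l:=g_l(T,M)$, and denote the channel output by $Q^l$. The first step is to establish the four Fano--Holevo lower bounds
\begin{align*}
\chi(K;Q^l|M,T) & \geq \log K_l - l\epsilon_l,\\
\chi(T;Q^l|M,K) & \geq \log T_l - l\epsilon_l,\\
\chi(K,T;Q^l|M) & \geq \log(K_l T_l) - l\epsilon_l,\\
\chi(M,K,T;Q^l) & \geq \log(M_l K_l T_l) - l\epsilon_l,
\end{align*}
with $\epsilon_l\to 0$. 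Each of these comes from applying Holevo's inequality to the outcomes of the POVM $\mathbf D^l$ (conditioned on the other messages) together with the classical Fano inequality applied to the corresponding marginal decoding error, which is upper bounded by the joint error $1-p_{\mathrm s}(\mathfrak C_l)$.

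The second step is to single-letterize. Since $X^l$ is a function of $(K,M)$ and $Y^l$ of $(T,M)$, Holevo data processing lets me pass, for instance, from $\chi(K;Q^l|M,T)$ to $\chi(X^l;Q^l|Y^l,M)$. Since $\mathcal W^{\otimes l}$ is a product channel, conditioning on the classical quantities $(Y^l,M)$ leaves a tensor-product cq channel with classical input $K$, so the inequality $S(\otimes_i\sigma_i)\leq\sum_i S(\sigma_i)$ yields
\begin{align*}
\chi(X^l;Q^l|Y^l,M)\leq\sum_{i=1}^l\chi(X_i;Q_i|Y_i,M),
\end{align*}
and analogous single-letter bounds for the other three quantities. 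Next I introduce a uniform time-sharing index $J$ on $[l]$ independent of $(M,K,T)$ and set $U:=(M,J)$, $X:=X_J$, $Y:=Y_J$, $Q:=Q_J$. Averaging over $J$ converts the three sums into $l\cdot\chi(X;Q|Y,U)$, $l\cdot\chi(Y;Q|X,U)$ and $l\cdot\chi(X,Y;Q|U)$, producing the first three inequalities of $\mathfrak R_{q,\mathrm{comm}}(\mathcal W)$. The required factorization $q(u,x,y)=p(u)r(x|u)s(y|u)$ is immediate from the independence of $M,K,T$ and the fact that $X_j$ depends only on $(K,M)$ while $Y_j$ depends only on $(T,M)$.

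The fourth (unconditional) bound requires a different twist: naively one would be left with $\chi(X,Y;Q|J)$, which is in general larger than $\chi(X,Y;Q)$. Instead I invoke concavity of $p\mapsto\chi(p,\mathcal W)$ in the input distribution (a direct consequence of concavity of $S$), which after summing the single-letter bounds gives
\begin{align*}
\sum_{i=1}^l \chi(X_i,Y_i;Q_i)\leq l\cdot\chi(\overline{\pi},\mathcal W)=l\cdot\chi(X,Y;Q),
\end{align*}
where $\overline\pi$ is the marginal distribution of $(X_J,Y_J)$ and coincides with the $(X,Y)$-marginal of $q$. I expect the main technical annoyance to be the Fano--Holevo bookkeeping for individual messages under joint decoding: the POVM yields a single $(\hat M,\hat K,\hat T)$-estimate, and one has to carefully control the marginal error of each message by the joint error before feeding it into the appropriate conditional Holevo inequality. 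A routine Carath\'eodory/Fenchel argument would, if desired, also produce a cardinality bound $|\bU|\leq|\bX|\cdot|\bY|+3$ of the type used in Theorem~\ref{theorem:converse-for-conferencing-MAC}.
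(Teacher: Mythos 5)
Your proposal is correct and follows essentially the same route as the paper: Fano's inequality for the individual message errors, the Holevo bound combined with data processing (using that $X^l$ is a function of $(K,M)$ and $Y^l$ of $(T,M)$), subadditivity of the Holevo quantity across tensor factors (the paper's Lemma \ref{lemma:winter-subadditivity}) together with the conditional independence of $X_i$ and $Y_i$ given the common message. The only divergence is in the final single-letterization, where you introduce an explicit time-sharing variable $U=(M,J)$ and invoke concavity of $\chi(\cdot,\mathcal W)$ for the unconditioned sum-rate bound, whereas the paper phrases the same step as a convex combination of points lying in a closed convex four-dimensional region $\mathfrak R^4$ (whose convexity in the fourth coordinate rests on exactly that concavity); note also that your stated worry points the wrong way --- by that very concavity one has $\chi(X,Y;Q|J)\leq\chi(X,Y;Q)$, so the ``naive'' conditioned quantity is already the smaller one and your concluding inequality is correct.
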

\begin{remark}
Above results, put together, establish the region $\mathfrak R_{\mathrm{conf}}(\mathcal W,C,D)$ as the rate region of the ccq-MAC $\mathcal W$with senders conferencing at rates $C,D$ and the region $\mathfrak R_{\mathrm{comm}}(\mathcal W)$ as the rate region for the same model but with common messages instead of conferencing senders.
\end{remark}
\end{section}
\begin{section}{\label{sec:Proofs}Proofs}
This section contains the proofs to all our four statements, first the converse for the ccq-MAC with conferencing encoders, then the direct part of the coding theorem for the ccq-MAC with a common message. The latter one directly leads to a proof of the direct part of the coding theorem for the case of conferencing encoders. Finally, we provide a proof of the converse for the ccq-MAC with common message.
\begin{subsection}{\label{subsec:proof-of-converse-for-conferencing-MAC}Proof of the converse for the ccq-MAC with conferencing encoders}
\begin{proof}[Proof of Theorem \ref{theorem:converse-for-conferencing-MAC}]
Let us first derive the bound on the cardinality of the set $\bU$. To this end, consider Lemma 3 in \cite{ahlswede-koerner}. For states of the form used in Theorem \ref{theorem:converse-for-conferencing-MAC}, it ensures that for every such state defined using an arbitrary set $\bU'$, distribution $q'\in\mathfrak P(U')$ and corresponding distributions $r(\cdot|u),\ s(\cdot|u)$ there is another set $\bU$ and a $q\in\mathfrak P(U)$ such that the cardinality of the set $\bU$ is bounded by $|\bU|\leq|\cX|\cdot|\cY|+3$ and
\begin{align}
\chi(X';Q'|Y',U')&=\chi(X;Q|Y,U)\\
\chi(Y';Q'|X',U')&=\chi(Y;Q|X,U)\\
\chi(X',Y';Q'|U')&= \chi(X,Y;Q|U)\\
\chi(X',Y';Q')&=\chi(X,Y;Q),
\end{align}
where the primes indicate that the overall system depends on our choice of the variable $U$ distributed according to $q$ or $U'$ distributed according to $q'$. It is worth noting that Lemma 3 in \cite{ahlswede-koerner} is applied such that even $(X',Y')=(X,Y)$, and the fourth of the above four equalities is exactly due to this fact.\\
An analoguous reasoning applies to the case of the ccq-MAC with common messages.
\\\\
We will now prove the converse theorem. Given an $(M_l,N_l,C,D)$ code $\mathfrak C_l$, we can define the states
$\sigma_{mn}:=|\bfm\rangle\langle \bfm|\otimes|\bfn\rangle\langle\bfn|\in\cs(\mathbb C^{M_l}\otimes\mathbb C^{N_l})$ and $\sigma'_{mn}:=|\mathbf{\mathcal C_l(m,n)}\rangle\langle\mathbf{\mathcal C_l(m,n)}|\otimes|\mathbf{\mathcal D_l(m,n)}\rangle\langle\mathbf{\mathcal D_l(m,n)}|\in\cs(\mathbb C^{C+D}\otimes\mathbb C^{C+D})$ for some arbitrary but fixed orthonormal bases in the respective spaces, and for every message pair $(m,n)\in[M_l]\times[N_l]$. They define the overall state
\begin{align}
\rho_l:=\frac{1}{M_l\cdot N_l}\sum_{m,n=1}^{M_l,N_l}\sigma_{mn}\otimes\sigma'_{mn}\otimes\cW^{\otimes l}((f_l\circ\mathcal C_l)\times(g_l\circ\mathcal D_l)(m,n)),
\end{align}
that we will be using in order to calculate our entropic quantities whenever the specific measurement outcomes of the decoder are of no importance. Since we intend to apply the Holevo bound within the first steps of our proof, this will soon be the case. The state $\rho_l$ contains all the information that is contained in the process of randomly selecting the messages, then evaluate the outcome of the conference for the specific messages, an encoding that depends on both the messages and the outcome of the conference, and producing a quantum state that can be used for decoding at the receiver's side.\\
Using the POVM $\mathbf D^l$ as a completely positive map $\mathbf D^l\in\mathcal C(Q^{\otimes l},Q^{\otimes l}\otimes\hr'_{M_l}\otimes\hr'_{N_l})$ (where $\hr'_{M_l}=\mathbb C^{M_l}$ and $\hr'_{N_l}=\mathbb C^{N_l}$) defined via
\begin{align}
\mathbf D^l(A):=\sum_{m,n=1}^{M_l,N_l}\tr\{AD^l_{m,n}\}|\bfm\rangle\langle\bfm|\otimes|\bfn\rangle\langle\bfn|\qquad\forall\ A\in\mathcal B(\kr^{\otimes l})
\end{align}
we further define the state after the sender has applied its recovery operation as
\begin{align}
\sigma_l:=(Id_{\hr_{M_l}\otimes\hr_{N_l}\otimes \hr_{U_l}}\otimes\mathbf D^l)(\rho_l).
\end{align}
To get started, let there be a sequence $(\mathfrak C_l)_{l\in\nn}$ of codes for the MAC $\mathcal W$ with encoders conferencing at rates $C$ and $D$. Let
\begin{align}
p_{\mathrm{s}}(\mathfrak C_l)\geq1-\eps_l
\end{align}
for some sequence $(\eps_l)_{l\in\nn}$ satisfying $\searrow\eps_l=0$. Fix $l$, for the time being. We will let it tend to infinity at the end of the proof. An application of Fano's inequality together with our prerequisites yields that
\begin{align}
H(M^l,N^l|\hat M^l,\hat N^l)\leq\eps_l\cdot\log(M_l\cdot N_l)+1
\end{align}
holds for the common distribution of the messages $(M^l,N^l)$ at senders and $(\hat{M}^l,\hat{N}^l)$ at receivers side. Note that, while $M_l$ and $N_l$ denote the numbers of messages, the symbols $M^l$ and $N^l$ denote random variables.\\
Let us now introduce the abbreviation $\delta_l:=\eps_l\cdot\log(M_l\cdot N_l)+1$. It follows
\begin{align}
\log(M_l)&=H(M^l|N^l)\leq I(M^l;\hat{M}^l,\hat{N}^l,U^l|N^l)+\delta_l\\
\log(N_l)&=H(N^l|M^l)\leq I(N^l;\hat{M}^l,\hat{N}^l,U^l|M^l)+\delta_l\\
\log(N_l\cdot M_l)&=H(N^l,M^l)\leq I(N^l,M^l;\hat{N}^l,\hat{M}^l,U^l)+\delta_l.
\end{align}
Above terms can trivially, by definition of the conditional entropic quantities involved, be split up as follows:
\begin{align}
I(M^l;\hat{M}^l,\hat{N}^l,U^l|N^l)&=I(M^l;U^l|N^l)+I(M^l;\hat{M}^l,\hat{N}^l|N^l,U^l)\\
I(N^l;\hat{M}^l,\hat{N}^l,U^l|M^l)&=I(N^l;U^l|M^l)+I(N^l;\hat{M}^l,\hat{N}^l|M^l,U^l)\\
I(N^l,M^l;\hat{M}^l,\hat{N}^l,U^l)&=I(N^l,M^l;U^l)+I(N^l,M^l;\hat{M}^l,\hat{N}^l|U^l).
\end{align}
From the results of \cite{willems}, his inequalities $(11),\ (12),\ (13)$ we know that
\begin{align}
I(M^l;U^l|N^l)&\leq l\cdot C\\
I(N^l;U^l|M^l)&\leq l\cdot D\\
I(N^l,M^l;U^l)&\leq l\cdot(C+D).
\end{align}
Now consider the following three conditional independence relations that were proven in \cite{willems}:\\
First (equation (20) in \cite{willems}), the random variable $(M^l,N^l,U^l)$ is distributed as
\begin{align}
\mathbb P((M^l,N^l,U^l)=(m,n,u))=\mathbb P(M^l=m|U^l=u)\cdot\mathbb P(N^l=n|U^l=u)\cdot\mathbb P(U^l=u).\label{eqn:willems-20}
\end{align}
Second (equation (21) in \cite{willems}), and as a consequence of above equalities, the distribution of the random variable $(X^l,Y^l,U^l)$ obeys
\begin{align}
\mathbb P((X^l,Y^l,U^l)=(x^l,y^l,u))=\mathbb P(X^l=m|U^l=u)\cdot\mathbb P(Y^l=n|U^l=u)\cdot\mathbb P(U^l=u).\label{eqn:willems-21}
\end{align}
Third, this implies that also for each $i\in[l]$ the random variable $(U^l,X_i,Y_i)$ is distributed according to
\begin{align}
\mathbb P((U^l,X_i,Y_i)=(u^l,x,y))=\mathbb P(X_i=x|U^l=u^l)\cdot\mathbb P(Y_i=y|U^l=u^l)\cdot\mathbb P(U^l=u^l).\label{eqn:willems-22}
\end{align}
In order to have no confusion arising about the nature of this conditional independence let us note that in all three of the above statements the independence is conditioned on the outcome of the whole conference, which can in general not be single-letterized!\\
We will also need the following subadditivity relation for the Holevo quantity, that we borrow from \cite{winter-diss}:
\begin{lemma}\label{lemma:winter-subadditivity}
For CQ channels $\mathcal W_1,\mathcal W_2\in CQ(\bX,\kr)$ (their in- and output systems will be denoted $X_i$ and $Q_i$, $i=1,2$) and a probability distribution $p\in\mathfrak P(\bX\times\bX)$ we have, for the overall state
\begin{align}
\rho_{X_1X_2,Q_1,Q_2}:=\sum_{x_1,x_2}p(x_1,x_2)|x_1\rangle\langle x_1|\otimes|x_2\rangle\langle x_2|\otimes\mathcal W_1(x_1)\otimes\mathcal W_2(x_2),
\end{align}
\begin{align}
\chi(X_1,X_2;Q_1,Q_2)\leq \chi(X_1;Q_1)+I(X_2;Q_2).
\end{align}
\end{lemma}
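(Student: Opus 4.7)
The plan is to expand the left-hand side directly from the definition of the Holevo quantity and then exploit two standard facts: additivity of the von Neumann entropy on tensor products, and its subadditivity on bipartite states. Concretely, I would first write
\[
\chi(X_1,X_2;Q_1,Q_2) = S(\rho_{Q_1Q_2}) - \sum_{x_1,x_2} p(x_1,x_2)\,S\bigl(\mathcal W_1(x_1)\otimes\mathcal W_2(x_2)\bigr),
\]
where $\rho_{Q_1Q_2}$ denotes the partial trace of $\rho_{X_1X_2,Q_1,Q_2}$ over the classical $X_1X_2$ registers.

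Since each conditional state $\mathcal W_1(x_1)\otimes\mathcal W_2(x_2)$ is a product, additivity of entropy on tensor products gives $S(\mathcal W_1(x_1)\otimes\mathcal W_2(x_2)) = S(\mathcal W_1(x_1)) + S(\mathcal W_2(x_2))$, so the averaged subtracted term splits into $\sum_{x_1}p_1(x_1)S(\mathcal W_1(x_1)) + \sum_{x_2}p_2(x_2)S(\mathcal W_2(x_2))$, where $p_1,p_2$ are the marginals of $p$. In parallel, subadditivity of the von Neumann entropy on the joint quantum marginal yields $S(\rho_{Q_1Q_2}) \leq S(\rho_{Q_1}) + S(\rho_{Q_2})$, and since tracing out the classical registers commutes with averaging, the quantum marginals are precisely $\rho_{Q_i} = \sum_{x_i}p_i(x_i)\mathcal W_i(x_i)$. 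Combining the two bounds and grouping the terms with matching subscripts gives $\chi(X_1;Q_1) + \chi(X_2;Q_2)$ on the right.

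To match the precise formulation of the lemma I would finally invoke that on any classical-quantum state the quantum mutual information $I(X_2;Q_2)$ coincides with the Holevo quantity $\chi(X_2;Q_2)$, so the bound rewrites as $\chi(X_1;Q_1)+I(X_2;Q_2)$. I do not anticipate a genuine obstacle: the argument hinges on a single application of subadditivity, and the rest is bookkeeping of marginal distributions. The only step that warrants a moment of care is checking that the conditional product structure $\mathcal W_1(x_1)\otimes\mathcal W_2(x_2)$—which holds only \emph{per fixed} $(x_1,x_2)$—is already enough to split the averaged conditional entropy, since the classical registers $X_1,X_2$ may themselves be correlated through $p$.
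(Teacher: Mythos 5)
Your proof is correct and complete. Note that the paper itself does not prove this lemma at all --- it is imported verbatim from Winter's PhD thesis \cite{winter-diss} --- so there is no in-paper argument to compare against; your derivation supplies the standard one. The three ingredients you use (additivity of von Neumann entropy on product states, subadditivity $S(\rho_{Q_1Q_2})\leq S(\rho_{Q_1})+S(\rho_{Q_2})$, and the identification of the quantum marginals $\rho_{Q_i}=\sum_{x_i}p_i(x_i)\mathcal W_i(x_i)$ with the averaged channel outputs under the marginals $p_i$ of $p$) are exactly what is needed, and your closing caveat is the right one: the correlation between $X_1$ and $X_2$ is entirely absorbed by the subadditivity step on the joint output state, while the subtracted conditional-entropy term splits exactly because each conditional state is a product for fixed $(x_1,x_2)$. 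The only cosmetic remark is that the $I(X_2;Q_2)$ appearing on the right-hand side of the lemma as stated is evidently a notational slip for $\chi(X_2;Q_2)$; as you observe, on a classical--quantum state the two quantities coincide, so nothing is lost.
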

\begin{remark}\label{remark:winter-subadditivity}
It is clear that above theorem holds as well if there is a conditioning on a third classical system, e.g. $\chi(X_1,X_2;Q_1,Q_2|U)\leq \chi(X_1;Q_1|U)+\chi(X_2;Q_2|U)$.
\end{remark}
We now continue with the proof of Theorem \ref{theorem:converse-for-conferencing-MAC}. Together with equation (\ref{eqn:willems-20}) and the Holevo bound, above Lemma enables us to validate the following chain of inequalities:
\begin{align}
I(M^l;\hat{M}^l,\hat{N}^l|N^l,U^l)&\leq I(X^l;\hat{M}^l,\hat{N}^l|Y^l,U^l)\\
&\leq \chi(X^l;Q^l|Y^l,U^l)\\
&\leq\sum_{i=1}^l\chi(X_i;Q_i|Y^l,U^l).
\end{align}
But both $X_i$ and $Q_i$ are independent of $Y_1,\ldots,Y_{i-1},Y_{i+1},\ldots,Y_{l}$ given $U^l=u$, thus for every $i\in[l]$ we get
\begin{align}
\chi(X_i;Q_i|Y^l,U^l)=\chi(X_i;Q_i|Y_i,U^l),
\end{align}
establishing
\begin{align}
I(M^l;\hat{M}^l,\hat{N}^l|N^l,U^l)&\leq\sum_{i=1}^l\chi(X_i;Q_i|Y_i,U^l).
\end{align}
In the very same manner we can prove that
\begin{align}
I(N^l;\hat{M}^l,\hat{N}^l|M^l;U^l)&\leq\sum_{i=1}^l\chi(Y_i;Q_i|X_i,U^l).
\end{align}
Using the Holevo bound and Lemma \ref{lemma:winter-subadditivity} we can also easily establish
\begin{align}
I(N^l,M^l;\hat{N}^l,\hat{M}^l|U^l)&\leq\sum_{i=1}^l\chi(X_i,Y_i;Q_i|U^l).
\end{align}
A further consequence of the Holevo bound and Lemma \ref{lemma:winter-subadditivity} is the upper bound
\begin{align}
I(M^l,N^l;\hat{N}^l,\hat{M}^l)&\leq\sum_{i=1}^l\chi(X_i,Y_i;Q_i).
\end{align}
Putting together what we found out so far we see that the following inequalities hold:
\begin{align}
\log(M_l)&\leq\sum_{i=1}^l\chi(X_i;Q_i|Y_i,U^l)+l\cdot C+\delta_l,\label{eqn-31}\\
\log(N_l)&\leq\sum_{i=1}^l\chi(Y_i;Q_i|Y_i,U^l)+l\cdot D+\delta_l,\label{eqn-32}\\
\log(M_l\cdot N_l)&\leq\sum_{i=1}^l\chi(X_i,Y_i;Q_i|U^l)+l\cdot(C+D)+\delta_l,\label{eqn-33}\\
\log(M_l\cdot N_l)&\leq\sum_{i=1}^l\chi(X_i,Y_i;Q_i)+\delta_l\label{eqn-34}.
\end{align}
We are going to show that this implies the following: Every pair of achievable rates is arbitrarily close to a convex combination of points taken from the rate region described in Theorem \ref{theorem:converse-for-conferencing-MAC}.\\
But that will imply that $(R_M,R_N)\in\mathfrak R_{\mathrm{conf}}(\mathcal W)$, since the latter is a closed set.\\
Let us abbreviate the above regularized sums over mutual informations as follows:
\begin{align*}
P_{1,l}:=\sum_{i=1}^l\chi(X_i;Q_i|Y_i,U^l), \ P_{2,l}:=\sum_{i=1}^l\chi(Y_i;Q_i|Y_i,U^l),\\
P_{3,l}:=\sum_{i=1}^l\chi(X_i,Y_i;Q_i|U^l), \ P_{4,l}:=\sum_{i=1}^l\chi(X_i,Y_i;Q_i).
\end{align*}
It is then evident that the following equalities hold:
\begin{align}
P_{1,l}=\sum_{j=1}^l\frac{1}{l}\chi_{p_j}(X;Q|Y,U),\qquad P_{2,l}=\sum_{j=1}^l\frac{1}{l}\chi_{p_j}(Y;Q|X,U),\\
P_{3,l}=\sum_{j=1}^l\frac{1}{l}\chi_{p_j}(X,Y;Q|U),\qquad P_{4,l}=\sum_{j=1}^l\frac{1}{l}\chi_{p_j}(X,Y;Q),
\end{align}
where the Holevo quantities are evaluated over states as described in Theorem \ref{theorem:converse-for-conferencing-MAC} with respective probability distributions $p_j$ defined by $p_j(u,x,y):=\mathbb P((U^l,X_j,Y_j)=u,x,y)$. As can be seen from equation (\ref{eqn:willems-22}), these distributions obey the structure that is stated in Theorem \ref{theorem:converse-for-conferencing-MAC}. Going further in our discussion we see that the points $P_{1,l},\ldots,P_{4,l}$ are contained in the four dimensional closed (and also convex, due to the freedom in the choice of the alphabets $\bU$) region $\mathfrak R^4$ defined by
\begin{align*}
\mathfrak R^4:=\left\{\right.(R_i)_{i=1}^4|&R_1\leq \chi(X;Q|Y,U)+C,\ R_2\leq \chi(Y;Q|X,U)+D,\ R_3\leq \chi(X,Y;Q|U)+C+D,\\
&R_4\leq \chi(X,Y;Q)\ \mathrm{for\ some\ alphabet\ }\bU\ \mathrm{and\ }(U,X,Y,Q)\ \mathrm{a\ classical-quantum}\\
&\mathrm{system\ as\ in\ Theorem\ \ref{theorem:converse-for-conferencing-MAC}}\left\}\right.
\end{align*}
for all $l\in L$. But then for all $\eps>0$ and $l\geq L=L(\eps)$ we have that
\begin{align}
R_M\leq P_{1,l}+3\eps,\qquad R_N\leq P_{2,l}+3\eps\\
R_M+R_N\leq\min\left\{P_{3,l}+3\eps,\ P_{4,l}+3\eps\right\}
\end{align}
and whence for all $\eps>0$ the vector $(R_M,R_N,R_M+R_N,R_M+R_N)$ is contained in $\mathfrak R^4+B_{3\eps}$, where for two sets $A,B\subset \mathbb R^4$ their sum is defined by $A+B:=\{a+b:a\in A,\ b\in B\}$ and for an $\eps>0$ we denote a small ball of radius $\eps$ (in one-norm) around the point $0\in\mathbb R^4$ by $B_\eps:=\{x\in\mathbb R^4:\sum_i|x_i|\leq\eps\}$.\\
Therefore $(R_M,R_N,R_M+R_N,R_M+R_N)\in\mathrm{cl}(\mathfrak R^4)$, and from that we deduce that $(R_M,R_N)\in\mathrm{cl}(\cup_p\mathfrak R_{p,\mathrm{conf}}(\mathcal W))=\mathfrak R_{\mathrm{conf}}(\mathcal W)$, which completes the proof of the converse for the MAC with conferencing encoders.
\end{proof}
\end{subsection}
\begin{subsection}{\label{subsec:direct-part-for-MAC-with-common-message}Construction of codes for the MAC with a common message}
In our proof of Theorem \ref{theorem:direct-part-of-MAC-with-common-message}, we will need the following theorem which is a suitably reformulated version of Theorem 10 in \cite{winter}:
\begin{theorem}\label{thm:winter-existence-of-codes}
For $\lambda,\tau\in(0,1)$ there is a number $K(\lambda,\tau,|\bU|,d)$ such that for every cq channel $\mathcal T\in CQ(\bU,\kr)$, probability distribution $q\in\mathfrak P(\bU)$, $l\in\nn$, and $\mathcal A\subset\bU^l$ with $q^{\otimes l}(\mathcal A)\geq\tau$ there are codewords $\{u^l_m\}_{m=1}^{M_n}$ and a decoding POVM $\{D_m\}_{m\in M_l}$ on $\kr^{\otimes l}$ with the properties
\begin{align}
\forall m\in[M_l],\ u^l_m\in\mathcal A\ \ \mathrm{and}\ \ M_l\geq2^{l(\chi(q,\mathcal T)-\frac{1}{\sqrt{l}}K(\lambda,\tau,|\bU|,d))}\ \ \mathrm{and}\ \ \min_{m\in[M_l]}\tr\{\mathcal T^{\otimes l}(u^l_m)D_m\}\geq1-\lambda.
\end{align}
It holds $K(\lambda,\tau,|\bU|,d)=K'd|\bU|\sqrt{2d/\lambda}+K'|\bU|\sqrt{2|\bU|d/ \tau}\log(d)$ for some universal constant $K'$.
\end{theorem}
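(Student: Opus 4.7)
The plan is to obtain these codes by a constrained random coding argument of Holevo--Schumacher--Westmoreland (HSW) type: I would generate codewords i.i.d.\ from $q^{\otimes l}$ conditioned on membership in $\mathcal A$ (intersected with a typicality set), decode via square-root measurements built from typical projectors, and finally expurgate to pass from average to minimum success probability. Since $q^{\otimes l}(\mathcal A)\geq \tau>0$, conditioning on $\mathcal A$ costs only a factor $\tau^{-1}$ in the concentration estimates, which enters the constant $K$ but not the first-order rate.

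More concretely, I would first set up the frequency-typical set $T^l_\delta\subset\bU^l$ for $q$, the projector $\Pi$ onto the typical subspace of the average output state $\bar{\mathcal T}^{\otimes l}$ (with $\bar{\mathcal T}:=\sum_u q(u)\mathcal T(u)$), and the conditional typical projectors $\Pi_{u^l}$ for $u^l\in T^l_\delta$. The workhorse estimates are $\tr\{\bar{\mathcal T}^{\otimes l}\Pi\}\geq 1-\eps$, $\tr\{\mathcal T^{\otimes l}(u^l)\Pi_{u^l}\}\geq 1-\eps$ for $u^l\in T^l_\delta$, and the operator inequality $\Pi\,\mathcal T^{\otimes l}(u^l)\,\Pi\leq 2^{-l(\chi(q,\mathcal T)-O(\delta))}\Pi$. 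Choosing the typicality deviation as $\delta=c/\sqrt{l}$ is what produces the announced $1/\sqrt{l}$ correction term and dictates the dependence of $c$ (hence $K$) on $|\bU|,d,\lambda,\tau$.

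With $M_l\leq 2^{l(\chi(q,\mathcal T)-K/\sqrt l)}$, I would draw $U^l_1,\dots,U^l_{M_l}$ i.i.d.\ from $q^{\otimes l}$ conditioned on $\mathcal A\cap T^l_\delta$ (which still has probability $\geq \tau/2$ for $l$ large, the factor $\tau$ entering $K$ at precisely this point), build operators $\Lambda_m:=\Pi\,\Pi_{U^l_m}\,\Pi$ and their square-root POVM $D_m:=(\sum_k\Lambda_k)^{-1/2}\Lambda_m(\sum_k\Lambda_k)^{-1/2}$, and then apply the Hayashi--Nagaoka inequality to the expected average error. This splits into a \emph{missed output} term absorbed by the typicality guarantees, and a \emph{false positive} term controlled by $M_l\cdot 2^{-l(\chi-O(\delta))}$ through the above operator inequality; balancing both below $\lambda/4$ is what determines the explicit form of $K$. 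A final expurgation of the worst half of codewords converts the average-error bound into the required minimum-error bound $\geq 1-\lambda$, at the price of at most one bit in $\log M_l$, which is absorbed into the $K/\sqrt l$ term.

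The main obstacle, and the reason the theorem is quoted rather than reproved in the text, is the \emph{non-asymptotic and explicit} form $K(\lambda,\tau,|\bU|,d)=K'd|\bU|\sqrt{2d/\lambda}+K'|\bU|\sqrt{2|\bU|d/\tau}\log d$: getting this precise expression requires Winter's refined strong-typicality bounds with the scaling $\delta\sim 1/\sqrt l$ and a careful accounting of how the conditioning loss $\tau^{-1}$, the dimension factor $d$, and the Hayashi--Nagaoka slack combine. The structural part of the proof is entirely standard HSW technology; the work lies in not hiding any constants, which is essential because Theorem \ref{thm:winter-existence-of-codes} will later be applied with parameters depending on $l$ through a finite, fixed $\tau$ coming from a conditional-typicality argument in the outer code.
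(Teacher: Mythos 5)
The paper does not prove this statement at all: Theorem \ref{thm:winter-existence-of-codes} is imported verbatim (as the text says, it is ``a suitably reformulated version of Theorem 10 in \cite{winter}''), so there is no in-paper proof to compare against. Your sketch is therefore necessarily a reconstruction, and it takes a genuinely different route from the cited source. Winter obtains this result by a maximal-code (Feinstein/Ahlswede-type greedy) construction: one extends a code with all codewords in $\mathcal A$ and maximal error $\leq\lambda$ until no further extension is possible, and shows that maximality forces $M_l$ to be large. That method delivers the two distinctive features of the statement --- the constraint $u^l_m\in\mathcal A$ and the \emph{minimum} (not average) success probability --- directly, without expurgation, and it is where the explicit $1/\sqrt l$ constant $K(\lambda,\tau,|\bU|,d)$ comes from. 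Your route (conditioned random coding, square-root/Hayashi--Nagaoka decoding, expurgation) is a standard and workable alternative that would also prove the theorem, at the cost of slightly worse constants; since the theorem is only ever applied here with fixed $\lambda_l=l^{-1/4}$-type parameters and the precise form of $K$ is never exploited beyond $K/\sqrt l\to0$, that loss is harmless for this paper.

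Two points in your sketch need repair before it is a proof. First, the operator inequality you call the workhorse, $\Pi\,\mathcal T^{\otimes l}(u^l)\,\Pi\leq 2^{-l(\chi(q,\mathcal T)-O(\delta))}\Pi$, is false as stated (take $\mathcal T(u)$ pure: the pinched state still has operator norm close to $1$). The correct ingredients are $\Pi\,\overline{\mathcal T}^{\otimes l}\,\Pi\leq 2^{-l(S(\overline{\mathcal T})-c\delta)}\Pi$ together with $\tr\{\Pi_{u^l}\}\leq 2^{l(\sum_u\frac{1}{l}N(u|u^l)S(\mathcal T(u))+c\delta)}$; it is only their combination inside the false-positive term that produces the $2^{-l(\chi-O(\delta))}$ decay. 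Second, once you condition the codeword distribution on $\mathcal A\cap T^l_\delta$ you lose the product structure, so $\mathbb E[\mathcal T^{\otimes l}(U^l_{m'})]$ is no longer $\overline{\mathcal T}^{\otimes l}$; you must explicitly invoke the domination $q^{\otimes l}(\cdot\mid\mathcal A\cap T^l_\delta)\leq (q^{\otimes l}(\mathcal A\cap T^l_\delta))^{-1}q^{\otimes l}(\cdot)\leq(2/\tau)\,q^{\otimes l}(\cdot)$ to push the expectation through, which is exactly where $\tau$ enters the error bound (as an additive $\frac1l\log(2/\tau)$ in the rate). With these two corrections the architecture you describe closes, though recovering the exact displayed expression for $K$ would still require Winter's sharper bookkeeping.
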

The benefits of codes constructed from application of this theorem are, that some control over the structure of the codewords is given. We will now begin with the main topic of this section:
\begin{proof}[Proof of Theorem \ref{theorem:direct-part-of-MAC-with-common-message}]
Take any finite set $\bU$. Take any distribution $p\in\mathfrak P(\bU)$, and for each $u\in\bU$ take two (conditional) probability distributions $r(\cdot|u)\in\mathfrak P(\bX)$ and $s(\cdot|u)\in\mathfrak P(\bY)$. We may then define a cq-channel $\mathcal V\in CQ(\bU,\kr)$ by
\begin{align}
\mathcal V(u):=\sum_{x\in\bX}\sum_{y\in\bY}r(x|u)s(y|u)\mathcal W(x,y).
\end{align}
The construction of the code will be in two steps. First we find a code for the common message, then (conditioned on that code) we construct a code for the private messages. Intuitively speaking, the receiver tries to decode the common messages first, and afterwards applies a decoder for the private messages.\\
\textbf{Construction of code for the common message:}
We choose $\lambda$ dependent on $l$, more precisely we set $\lambda_l:=l^{-1/4}$ and apply this theorem to the channel $\mathcal V$. The distribution on its input system will be $p$, and the set $\mathcal A$ will be dependent on $l$ as well: we choose sets $\mathcal A_l:= T_{p,l^{-1/8}}$.\\
Here, the \emph{frequency-typical sets} $T_{p,l^{-1/8}}$ are defined as $T_{p,l^{-1/8}}:=\{x^l:\|\frac{1}{l}N(\cdot|x^l)-p(\cdot)\|_1\leq l^{-1/8}\}$.\\
An application of Theorem \ref{thm:winter-existence-of-codes} then yields a sequence of codes satisfying
\begin{align}
\liminf_{l\to\infty}\frac{1}{l}\log M_l=\chi(p,\mathcal V)\qquad\mathrm{and}\qquad \forall l\in\nn:\ \ \min_{m\in[M_l]}\tr\{\sqrt{\Xi_m}\mathcal V^{\otimes l}(u^l_m)\sqrt{\Xi_m}\}\geq1-l^{-1/4},
\end{align}
where we did not write out the dependence of the operators $\Xi_m$ on $l$ (in order to spare a few indices and get a more compact notation).
The slightly unconventional way of writing above formula using the obvious decomposition $\Xi_m=\sqrt{\Xi_m}\sqrt{\Xi_m}$ is due to our intent to apply the gentle measurement operator Lemma for ensembles later. An important additional property of our code is that every codeword $u^l_m$ satisfies $\|\frac{1}{l}N(\cdot|u^l_m)-p(\cdot)\|\leq l^{-1/8}$.\\
\textbf{Construction of code for the private messages:} First, assume for the moment that a given codeword $u^l$ can be written as $u^l=\prod_{u\in\bU}u^{N(u|u^l)}$. We hereby implicitly and without loss of generality assume that an ordering of the symbols of $\bU$ is given: $i\geq j\Rightarrow u_i\geq u_j$. Let us also use the abbreviation $l_u:=N(u|u^l)$.\\
Now, with the same convention on the ordering as before, parse $\bX^l\times\bY^l$ as
\begin{align}
\bX^l\times\bY^l=\prod_{u\in\bU}(\bX\times\bY)^{l_u}.
\end{align}
On each of the above blocks $(\bX\times\bY)^{l_u}$, the work \cite{fawzi-hayden-savov-sen-wilde} describes how to perform a random choice of $K_u$ codewords $\{x^l_{k_u}\}_{k_u=1}^{K_u}\subset\bX^{l_u}$ and $T_u$ codewords $\{y^n_{t_u}\}_{t_u=1}^{T_u}$ in $\bY^{l_u}$ together with a decoding POVM $(\Lambda_{k_u,t_u}(\{x^n_{k_u}\}_{k_u=1}^{K_u},\{y^n_{t_u}\}_{t_u=1}^{T_u}))_{k_u=1,t_u=1}^{K_u,T_u}$ such that a good code for the ccq-MAC $\mathcal W$ is obtained. More precisely, the codewords are chosen at random, all independently from one another, and on each block $(\bX\times\bY)^{l_u}$ according to $r^{\otimes l_u}(\cdot|u)$ and $s^{\otimes l_u}(\cdot|u)$. Due to independence of the choice of codewords, the results of \cite{fawzi-hayden-savov-sen-wilde} (their Theorem 2 and the corresponding proof) guarantee that this can be done for numbers $(K_u,T_u)_{u\in\bU}$ satisfying for each $u$ the inequalities and some fixed $\delta>0$
\begin{align}
\frac{1}{l_u}\log(K_u)&\geq \chi(X;Q|Y,U=u)-\delta\\
\frac{1}{l_u}\log(T_u)&\geq \chi(Y;Q|X,U=u)-\delta\\
\frac{1}{l_u}[\log(K_u)+\log(L_u)]&\geq \chi(X,Y;Q|U=u)-\delta,
\end{align}
where the mutual informations are being evaluated with respect to the distributions $rs(\cdot,\cdot|u):=r(\cdot|u)s(\cdot|u)\in\mathfrak P(\bX\times\bY)$, and such that the expected error over the random choice of codewords and in uniform average over all the messages $(k_u,t_u)\in[K_u]\times[T_u]$ goes to zero for $l_u$ tending to infinity (since each $u^l\in T_{p,l^{-1/8}}$).
More precisely for each $u\in\bU$ and identifying a collection of codewords with the map $\mathcal C_u:[K_u]\times[T_u]\to\bX^{l_u}\times\bY^{l_u}$ satisfying $\mathcal C_u(k_u,t_u)=(x^{l_u}_{k_u},y^{l_u}_{t_u})\ \forall\ (k_u,t_u)\in[K_u]\times[T_u]$:
\begin{align}
\sum_{\mathcal C_u}\frac{\mathbb P(\mathcal C_u)}{K_uT_u}&\sum_{k_u,t_u=1}^{K_u,T_u}\tr\{\mathcal W^{\otimes l_u}(\mathcal C_u(k_u,t_u)\Lambda(\mathcal C_u)_{k_u,t_u}\}\\
&=\sum_{x^{l_u},y^{l_u}}r^{\otimes l_u}(x^{l_u}|u)s^{\otimes l_u}(y^{l_u}|u)\tr\{\mathcal W^{\otimes l_u}(x^{l_u},y^{l_u})A_u(x^{l_u},y^{l_u})\}\\
&\geq1-\nu(l_u),
\end{align}
where the operators $A_u(\cdot)$ are defined by
\begin{align}
A_u(x^{l_u},y^{l_u}):=\sum_{m=2}^{K_u}\sum_{n=2}^{T_u}\prod_{i,j=2}^{K_u,T_u}rs^{\otimes l_u}(x^{l_u}_i,y^{l_u}_j|u)\left(\sum_{k,t=2}^{K_u,T_u}P_{x^{l_u}_ky^{l_u}_t}\right)^{-1/2}P_{x^{l_u}y^{l_u}}\left(\sum_{k,t=2}^{K_u,T_u}P_{x^{l_u}_ky^{l_u}_t}\right)^{-1/2},
\end{align}
and $\mathbb P(\mathcal C_u)$ denotes the probability that the code $\mathcal C_u$ is chosen, and a function $\nu:\mathbb N\to\mathbb R$ satisfying $\lim_{n\to\infty}\nu(n)=0$.\\
Our requirement that $u^l\in T_{p,l^{-1/8}}$ then additionally yields (for large enough $l\in\nn$) that
\begin{align}
\frac{1}{l}\log(\prod_{u\in\bU}K_u)&\geq \chi(X;Q|Y,U)-2\delta\\
\frac{1}{l}\log(\prod_{u\in\bU}T_u)&\geq \chi(Y;Q|X,U)-2\delta\\
\frac{1}{l}[\log(\prod_{u\in\bU}K_u)+\log(T_l)]&\geq \chi(X,Y;Q|U)-2\delta\\
\frac{1}{l}\log(\prod_{u\in\bU}M_u)&\geq \chi(U;Q)-2\delta.
\end{align}
Additionally measuring $\Xi_m$ only adds another, asymptotically vanishing error due to the gentle operator lemma for ensembles. In that way, \emph{both} the message $m$ and the messages $\prod_{u\in\bU}(k_u,t_u)$ can be decoded simultaneously. We will make this more explicit at the end of the proof for general (not ordered) codewords $u^l_m$.\\
This case actually requires no additional reasoning, apart from adding a permutation on the whole system and putting an additional index $m$ everywhere to account for that. The permutation will simply re-order $u^l_m$ so that we can apply our above reasoning.\\
So, let us assume that a codeword $u^l_m$ has some arbitrary ordering of the symbols, let $\sigma_m\in S_l$ be a permutation with the property that $\sigma_m (u^n):=(u_{\sigma_m^{-1}(1)},\ldots,u_{\sigma_m^{-1}(l)})$ is well-ordered. The code we employ in that case is obtained by applying $\sigma_m$ to each codeword and the operation $\sigma_m\cdot\sigma_m^{-1}$ to each POVM element.\\
We combine these codes to get one for each codeword $u^l_m$ with lengths $l_{u}^m:=N(u|u^l_m)$. In order to save some indices we do not write the dependence of the numbers $T_u$ and $K_u$ on the lengths $l_u^m$ out at this point, we will later on take the minimum over all choices of the codeword $m$ for both of $T_u$ and $K_u$ anyway. So, for an arbitrary codeword $u^l_m$ we get:
\begin{align}
(\prod_{u\in\bU}\sum_{\mathcal C_{m,u}}\frac{\mathbb P(\mathcal C_{m,u})}{K_uT_u})&\sum_{k_u,t_u=1}^{K_u,T_u}\tr\{\mathcal W^{\otimes l_u^m}(\mathcal C_{m,u}(k_u,t_u)\Lambda(\mathcal C_{m,u})_{k_u,t_u}\}\\
&=\prod_{u\in\bU}\sum_{x^{l_u^m},y^{l_u^m}}r^{\otimes l_u^m}(x^{l_u^m}|u)s^{\otimes l_u^m}(y^{l_u^m}|u)\tr\{\mathcal W^{\otimes l_u^m}(x^{l_u^m},y^{l_u^m})A_u(x^{l_u^m},y^{l_u^m})\}\\
&\geq\prod_{u\in\bU}(1-\nu(l_u^m))\\
&\geq1-\sum_{u\in\bU}\nu(l_u^m).
\end{align}
The operators $A_u(\cdot,\cdot)$ (we refrain from ballasting them with further indices, the dependence on $m$ will be clear from the argument) are defined (see \cite{fawzi-hayden-savov-sen-wilde}) as
\begin{align}
A_u(x^{l_u^m},y^{l_u^m}):=\sum_{m,n=2}^{K_u,T_u}\prod_{i,j=2}^{K_u,T_u}rs^{\otimes l_u^m}(x^{l_u^m}_i,y^{l_u^m}_j|u)\left(\sum_{k,t=2}^{K_u,T_u}P_{x^{l_u^m}_ky^{l_u^m}_t}\right)^{-\frac{1}{2}}P_{x^{l_u^m}y^{l_u^m}}\left(\sum_{k,t=2}^{K_u,T_u}P_{x^{l_u^m}_ky^{l_u^m}_t}\right)^{-\frac{1}{2}},
\end{align}
and the symbols $P_{x^{l_u^m}y^{l_u^m}}$ appearing above denote the weakly typical subspaces for $\mathcal W$, as defined in \cite{fawzi-hayden-savov-sen-wilde}, (their POVM construction on page five). The reason for writing $A_u(\cdot,\cdot)$ in this form will become apparent soon.\\
Using for the transmission of private messages the sets $\prod_u[K_u]$ and $\prod_u[T_u]$ (which, setting $K_l:=\prod_uK_u$ and $T_l:=\prod_uT_u$ may be identified with $[K_l]$ and $[T_l]$, we arrive for each $u^l_m$ ($m\in[M_l]$) at the existence of a random choice of codes having asymptotically perfect performance with respect to the average error criterion and satisfying
\begin{align}
\frac{1}{l}\log(K_l)&\geq \chi(X;Q|Y,U)-2\delta\\
\frac{1}{l}\log(T_l)&\geq \chi(Y;Q|X,U)-2\delta\\
\frac{1}{l}[\log(K_l)+\log(T_l)]&\geq \chi(X,Y;Q|U)-2\delta\\
\frac{1}{l}\log(M_l)&\geq \chi(U;Q)-2\delta,
\end{align}
once $l$ is large enough. Note that this also implies that
\begin{align}
\frac{1}{l}\log(K_lM_lT_l)\geq\chi(X,Y;Q)-4\delta.
\end{align}
A crucial property in the derivation of above estimates is that the maps $p\mapsto I(A;B|U)$ are continuous once the systems $A$ and $B$ are finite, and that for every one of our codewords $u^l_m$ it holds $\|\frac{1}{l}N(\cdot|u^l_m)-p\|\leq l^{-1/8}$.\\
Now we are in the position to apply the gentle operator Lemma for ensembles. For each fixed $u^l_m$ let $\bar u^l_m$ be ordered according to $\bar u_i\leq\bar u_j\Leftarrow i\leq j$. The random choice of codes $\mathcal C_m$ obtained from concatenating codewords $(k,t)=\prod_{u\in\bU}(k_u,t_u)$ as $\mathcal C_m(k,t):=\sigma_m(\prod_{u\in\bU}\mathcal C_u(k_u,t_u))$ and corresponding POVMs $\Lambda(\mathcal C_m)_{k,t}:=\sigma_m(\bigotimes_{u\in\bU}\Lambda(\mathcal C_u)_{k_u,t_u})\sigma_m^{-1}$ (where $\sigma_m$ here denotes the usual action of a permutation on $\kr^{\otimes l}$ by permuting the tensor factors) such that $\mathbb P(\mathcal C_m)=\prod_u\mathbb P(\mathcal C_{m,u})$ satisfies the following:
\begin{align}
\sum_{\mathcal C_m}\frac{\mathbb P(\mathcal C_m)}{K_mT_m}\sum_{k,t=1}^{K_m,T_m}&\tr\{\sqrt{\Xi_m}\mathcal W^{\otimes l}(\mathcal C_m(k,t))\sqrt{\Xi_m}\Lambda(\mathcal C_m)_{k,t}\}\\
&=\sum_{x^{l},y^{l}}rs^{\otimes l}(x^{l},y^{l}|u^l_m)\tr\{\sqrt{\Xi_m}\mathcal W^{\otimes l}(x^{l},y^{l})\sqrt{\Xi_m}(\bigotimes_{u\in\bU}A_u(x^{l_u^m},y^{l_u^m}))\}\\
&\geq\sum_{x^{l},y^{l}}rs^{\otimes l}(x^{l},y^l|u^l)[\tr\{\mathcal W^{\otimes n}(x^{n},y^{n})(\bigotimes_{u\in\bU}A_u(x^{l_u^m},y^{l_u^m}))\}\\
&\qquad\qquad\qquad-\|\sqrt{\Xi_m}\mathcal W^{\otimes l}(x^{l},y^{l})\sqrt{\Xi_m}-\mathcal W^{\otimes l}(x^{l},y^{l})\|_1]\\
&\geq1-\sum_{u\in\bU}\nu(l_u^m)-6\sqrt{l^{-1/4}}.
\end{align}
This proves that for every $l\in\nn$ and $m\in[M_l]$ there exist $K_{m,l},\ T_{m,l}$ and corresponding codewords $(x^l_{k_m})_{k_m\in[K_{m,l}]},\ (y^l_{t_m})_{t^m\in[T_{m,l}]}$ satisfying the above cardinality bounds hold as well as POVMs $(\Lambda^{(m)}_{k_m,t_m})_{k_m,t_m=1}^{K_m,T_m}$ such that
\begin{align}
\frac{1}{K_{m,l},T_{m,l}}\sum_{k_m,t_m=1}^{K_{m,l},T_{m,l}}\tr\{\sqrt{\Xi_m}\mathcal W^{\otimes l}(x^l_{k_m},y^l_{t_m})\sqrt{\Xi_m}\Lambda^{(m)}_{k_m,t_m}\}\geq1-\sum_{u\in\bU}\nu(l_u^m)-6\sqrt{l^{-1/4}}.
\end{align}
Since throwing away some codewords never decreases the error, we may well assume that all the numbers $K_{m,l},T_{m,l}$ ($1\leq m\leq M_l$) are in fact equal to $K_l:=\min\{K_{m,l}\}_{m\in M_l}$ and $T_l:=\min\{T_{m,l}\}_{m\in M_l}$ and it then (finally) holds, with the POVM
\begin{align}
\Delta_{k,t,m}:=\sqrt{\Xi_m}\Lambda_{k_m,t_m}^{(m)}\sqrt{\Xi_m}\qquad(\ \mathrm{for\ all}\ k\in[K_l],\ t\in[T_l],\ m\in[M_l]\ ),
\end{align}
the lower bound
\begin{align}
\frac{1}{K_lT_lM_l}\sum_{m=1}^{M_l}\sum_{k,l=1}^{K_l,T_l}\tr\{\mathcal W^{\otimes l}(\mathcal C_m(k,l))\Delta_{k,l,m}\}\geq1-\min_{m\in[M_l]}\sum_{u\in\bU}\nu(l_u^m)-6\sqrt{l^{-1/4}}
\end{align}
on the probability of successful transmission of our messages. Since the right hand side in above inequality goes to zero for $l$ going to infinity due to our choice $u^l_m\in T_{p,l^{-1/8}}\ \forall\ m\in[M_l],\ \forall\ l\in\nn$, we are done.
\end{proof}
\end{subsection}
\begin{subsection}{\label{subsec:direct-part-for-MAC-with-conferencing-encoders}Code construction for the MAC with conferencing encoders}
The construction of codes for the MAC with conferencing encoders (or, to be more precise, the proof of Theorem \ref{theorem:direct-part-for-MAC-with-conferencing-encoders}) from the existing ones for the MAC with common messages is now straightforward along the lines of \cite{willems}, and we will resist the temptation of producing a few redundant pages at this point.\\
Rather, we will give an informal scetch of proof:\\
Consider the two senders with conferencing capacities $C,D$ attempting to send messages at rates $R_M,R_N$. Define the numbers $c:=\min\{R_M,C\}$ and $d:=\min\{R_N,D\}$, and make a disjoint partitioning of the message set $[2^{nR_M}]=\cup_{i=1}^c M_i$ into subsets all having the same size, and the same for the other sender: $[2^{nR_N}]=\cup_{i=1}^d N_i$. Of course, a partitioning into sets of the exact same size is not always possible and there will usually be one set left that has a different size than all the others. Asymptotically however, these leftover sets play only a negligible role.\\
The senders now send as a conferencing message the index of the partition that their message is chosen from, and the conferencing only uses this one step.\\
The pairs $(i,j)$ of indices numbering the partitions can then be considered common messages of the two senders, and the code for the ccq-MAC with common messages from Theorem \ref{theorem:direct-part-of-MAC-with-common-message} is used. The requirement that all the sets $N_i,M_i$ are of the same size ensures that $(i,j)$ is evenly distributed, and this is true with a small and asymptotically vanishing error.\\
More details can for example be picked up in the original paper \cite{willems} by Willems.
\end{subsection}
\begin{subsection}{\label{subsec:converse-for-MAC-with-common-message}Converse for the MAC with a common message}
This proof follows in many ways the same reasoning as the one of the converse for the MAC with conferencing encoders. It should be noted that the conditional independence of the individual messages given the conferencing result (equations (\ref{eqn:willems-20}) to (\ref{eqn:willems-22})) is present in the scenario with common messages as well: Both senders act independently from one another if they want to, but can as well condition their encoding on the joint message. Thus in general, their individual messages are only independent given the joint message. We will give a rigorous proof.
\begin{proof}[Proof of Theorem \ref{theorem:converse-for-MAC-with-common-message}]
We will use the following Lemma, which is a straightforward generalization of Lemma 1 in \cite{slepian-wolf-MAC}, and therefore presented without proof:
\begin{lemma} Let $M,K,L$ be independent random variables with values in the finite sets $\bM,\bK,\bL$, each distributed evenly on the respective set. Let $\mathcal V\in CQ(\bX,\bY,\kr)$ and encoding functions $a:\bM\times\bK\to\bX$, $b:\bM\times\bK\to\bY$ be given, as well as a POVM $\mathbf D\in\mathcal M_{|\bM\times\bK\times\bL|}(\kr)$. Define the distribution $p\in\mathfrak P(\bM\times\bK\times\bL\times\bM'\times\bK'\times\bL')$ (where $\bM=\bM'$ and so on)
\begin{align}
p(m,k,l,m',k',l'):=\frac{1}{|\bM\times\bK\times\bL|}\tr\{\mathcal V(a(m,k),b(m,l))D_{k',m',l'}\},
\end{align}
and the quantity
\begin{align}
p_e:=1-\sum_{k,l,m}p(m,k,l,m,k,l).
\end{align}
Then for $p_e\leq1/2$,
\begin{align}
H(K|(M',K',L'),M,L)\leq p_e\log|\bK|+1,\\
H(L|(M',K',L'),M,K)\leq p_e\log|\bL|+1,\\
H(K,L|(M',K',L'),M)\leq p_e\log|\bK\times\bL|+1,\\
H(M,K,L|M',K',L')\leq p_e\log|\bK\times\bL\times\bM|+1.
\end{align}
\end{lemma}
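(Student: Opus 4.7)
The plan is to derive the four inequalities as straightforward applications of classical Fano to the joint distribution $p$; the statement is essentially quantum-free once one verifies that $p$ is a legitimate probability distribution on $(\bM \times \bK \times \bL) \times (\bM' \times \bK' \times \bL')$.

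First I would check that $p$ really is a probability distribution. Because $\mathbf D$ is a POVM summing to $\eins_\kr$ and the triple $(M,K,L)$ is uniform on $\bM \times \bK \times \bL$, one has $\sum_{m',k',l'} p(m,k,l,m',k',l') = \frac{1}{|\bM \times \bK \times \bL|}$, so $p$ has uniform marginal on the first three coordinates and the conditional on the primed coordinates is a well-defined probability vector. Consequently $p_e = 1 - \sum_{m,k,l} p(m,k,l,m,k,l)$ is the genuine classical probability $\Pr((M',K',L') \neq (M,K,L))$, and from here on the Lemma is purely classical.

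Next I would apply the conditional form of Fano's inequality: for any discrete random variables $X, Z$ and any function $\phi$,
\begin{align*}
H(X \mid Z) \leq H_2(\Pr(X \neq \phi(Z))) + \Pr(X \neq \phi(Z)) \log(|\bX| - 1).
\end{align*}
For the first bound I would take $X = K$, $Z = (M',K',L',M,L)$, and the estimator $\phi(Z) = K'$. The crucial containment $\{K \neq K'\} \subseteq \{(M,K,L) \neq (M',K',L')\}$ yields $\Pr(K \neq K') \leq p_e$; since the right-hand side of Fano is monotone increasing in the error probability on $[0, 1/2]$ and $p_e \leq 1/2$ by assumption, we obtain $H(K \mid Z) \leq H_2(p_e) + p_e \log(|\bK| - 1) \leq 1 + p_e \log|\bK|$, using $H_2 \leq 1$ and $\log(|\bK| - 1) \leq \log|\bK|$.

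The remaining three inequalities are proved identically, taking the estimator to be $L'$, $(K', L')$, and $(M', K', L')$ respectively, with alphabets $\bL$, $\bK \times \bL$, and $\bM \times \bK \times \bL$, and noting that in each case the marginal error event is contained in $\{(M,K,L) \neq (M',K',L')\}$. The only mildly delicate point is the monotonicity step used to replace the marginal error probability by $p_e$ in the Fano bound, which is precisely why the hypothesis $p_e \leq 1/2$ is imposed; otherwise the proof is just four invocations of classical Fano applied to the four subtuples of $(M,K,L)$, and I do not anticipate any genuine obstacle.
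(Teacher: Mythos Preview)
Your proposal is correct. The paper itself does not give a proof of this lemma, stating only that it is ``a straightforward generalization of Lemma 1 in \cite{slepian-wolf-MAC}, and therefore presented without proof''; your argument via four applications of the classical Fano inequality (with the monotonicity of $q\mapsto H_2(q)+q\log(|\cdot|-1)$ on $[0,1/2]$ justifying the replacement of each marginal error by $p_e$) is exactly the standard route and is what that reference amounts to.
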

Let a sequence $(\mathfrak C_l)_{l\in\nn}$ of codes for the MAC $\mathcal W$ with common messages be given such that $\eps_l:=1-p_{\mathrm{s}}(\mathfrak C_l)$ satisfies $\eps_l\searrow0$ at rates $R_X,R_Y,R_C$. Defining $\delta_l:=\eps_l\cdot\log|\bK\times\bL\times\bM|+1$, we see that (in the same manner as in Subsection \ref{subsec:proof-of-converse-for-conferencing-MAC}), we get the inequalities
\begin{align}
\log(K_l)&\leq I(K^l;\hat K^l,\hat M^l,\hat L^l|L^l,M^l)+\delta_l\\
\log(L_l)&\leq I(L^l;\hat K^l,\hat M^l,\hat L^l|K^l,M^l)+\delta_l\\
\log(K_l\cdot L_l)&\leq I(K^l,L^l;\hat K^l,\hat M^l,\hat L^l|M^l)+\delta_l\\
\log(K_l\cdot L_l\cdot M_l)&\leq I(K^l,L^l,M^l;\hat K^l,\hat M^l,\hat L^l)+\delta_l.
\end{align}
It then follows, from the structure of the encoding, the Holevo bound and Lemma \ref{lemma:winter-subadditivity}, applied in that order:
\begin{align}
I(K^l;\hat K^l,\hat M^l,\hat L^l|L^l,M^l)&\leq I(X^l;\hat K^l,\hat M^l,\hat L^l|Y^l,M^l)\\
&\leq \chi(X^l;Q^l|Y^l,M^l)\\
&\leq\sum_{i=1}^lI(X_i;Q_i|Y_i,M^l).
\end{align}
The same argument leads to the upper bound
\begin{align}
I(L^l;\hat K^l,\hat M^l,\hat L^l|K^l,M^l)&\leq\sum_{i=1}^l\chi(Y_i;Q_i|X_i,M^l).
\end{align}
The data processing inequality together with the Holevo bound and Lemma \ref{lemma:winter-subadditivity} in its conditional form (see Remark \ref{remark:winter-subadditivity}) yields
\begin{align}
I(K^l,L^l;\hat K^l,\hat M^l,\hat L^l|M^l)&\leq I(X^l,Y^l;\hat K^l,\hat M^l,\hat L^l|M^l)\\
&\leq\sum_{i=1}^l I(X_i,Y_i;Q_i|M^l).
\end{align}
We are finally left with the four inequalities
\begin{align}
\log(K_l)&\leq \sum_{i=1}^l\chi(X_i;Q_i|Y_i,M^l)+\delta_l\\
\log(L_l)&\leq \sum_{i=1}^l\chi(Y_i;Q_i|X_i,M^l)+\delta_l\\
\log(K_l\cdot L_l)&\leq\sum_{i=1}^l\chi(X_i,Y_i;Q_i|M^l)+\delta_l\\
\log(K_l\cdot L_l\cdot M_l)&\leq \sum_{i=1}^l\chi(X_i,Y_i;Q_i)+\delta_l.
\end{align}
(the last one being an obvious consequence of the converse for the single sender cq-channel). As in the proof for the MAC with conferencing encoders, this implies that for every $l\in\nn$ the points
\begin{align*}
P_{1,l}:=\sum_{i=1}^l\chi(X_i;Q_i|Y_i,U^l), \ P_{2,l}:=\sum_{i=1}^l\chi(Y_i;Q_i|Y_i,U^l),\\
P_{3,l}:=\sum_{i=1}^l\chi(X_i,Y_i;Q_i|U^l), \ P_{4,l}:=\sum_{i=1}^l\chi(X_i,Y_i;Q_i).
\end{align*}
are contained in the four dimensional closed (and also convex, due to the freedom in the choice of the alphabets $\bU$) region $\mathfrak R^4$ defined again by
\begin{align*}
\mathfrak R^4:=\left\{\right.(R_i)_{i=1}^4|&R_1\leq \chi(X;Q|Y,U),\ R_2\leq \chi(Y;Q|X,U),\ R_3\leq \chi(X,Y;Q|U),\ R_4\leq \chi(X,Y;Q)\\
&\mathrm{for\ some\ alphabet\ }\bU\ \mathrm{and\ }(U,X,Y,Q)\ \mathrm{a\ cq\ system\ as\ in\ Theorem\ \ref{theorem:direct-part-of-MAC-with-common-message}}\left\}\right.
\end{align*}
But then for all $\eps>0$ and $l\geq L=L(\eps)$ we have that
\begin{align}
S_X\leq P_{1,l}+3\eps,\qquad S_Y\leq P_{2,l}+3\eps,\qquad S_X+S_Y\leq P_{3,l}+3\eps,\qquad S_X+S_Y+S_C\leq P_{4,l}+3\eps
\end{align}
and whence for all $\eps>0$ the vector $(S_X,S_Y,S_X+S_Y,S_X+S_Y+S_C)$ is contained in $\mathfrak R^4+B_{3\eps}$, where again for two sets $A,B\subset \mathbb R^4$ their sum is defined by $A+B:=\{a+b:a\in A,\ b\in B\}$ and for an $\eps>0$ we the ball of radius $\eps$ (in one-norm) around the point $0\in\mathbb R^4$ is defined by $B_\eps:=\{x\in\mathbb R^4:\sum_i|x_i|\leq\eps\}$.\\
Therefore $(S_X,S_Y,S_X+S_Y,S_X+S_Y+S_C)\in\mathrm{cl}(\mathfrak R^4)$, and from that we deduce that $(S_X,S_Y,S_C)\in\mathrm{cl}(\cup_p\mathfrak R_{p,\mathrm{comm}}(\mathcal W))=\mathfrak R_{\mathrm{comm}}(\mathcal W)$, which completes the proof of the converse for the MAC with conferencing encoders.\\
It remains to see that the constraint on the size of the helping alphabet $\bU$ is valid. But the argument leading to that bound is exactly the same as in the proof of the converse for the MAC with conferencing encoders, so we refer the reader to that.
\end{proof}
\end{subsection}
\end{section}
\emph{Acknowledgements.}
This work was supported by the DFG via grant BO 1734/20-1 (H.B.) and by the BMBF via grant 01BQ1050 (H.B., J.N.).


\end{document}